\documentclass[11pt]{article}
\usepackage{amsmath,amssymb,amsbsy,amsfonts,amsthm,latexsym, amsopn, amstext,amsxtra, euscript, amscd, color, xcolor}
\usepackage{mathtools} % For \mathmakebox
\usepackage{array}
\usepackage{multirow}
\usepackage{makecell}
\usepackage{multicol}
\usepackage{adjustbox}
\usepackage{graphicx}
\usepackage{geometry}
\usepackage{booktabs}
\usepackage{marginnote}
\usepackage{tikz} 
\usepackage{listings}
\usepackage{xurl} 
\usepackage{longtable} % For tables that span multiple pages
\usepackage{algorithm} % For the algorithm environment
\usepackage{algpseudocode} % For algorithmic-style pseudocode

 % Use the fancyvrb package for powerful verbatim environments
\usepackage{fancyvrb}

\usepackage{pgfplots}
\pgfplotsset{compat=1.18}
\usetikzlibrary{plotmarks}

% For proper character encoding to handle special symbols.
% Modern LaTeX engines (like XeLaTeX/LuaLaTeX) handle UTF-8 natively.
% For pdfLaTeX, these packages help.
\usepackage[T1]{fontenc}
\usepackage[utf8]{inputenc}
 \usepackage{caption}
\usepackage{float}

% The listings package can be an alternative, but fancyvrb is excellent for this.

 \usepackage{seqsplit} % To split long sequences (like hex strings)
\usepackage{siunitx} % For typesetting numbers and units, 
 
% \usepackage{amsfonts}

% Listings style for code
\lstset{
    language=Python,
    basicstyle=\ttfamily\small,
    keywordstyle=\bfseries,
    commentstyle=\itshape,
    showstringspaces=false,
    breaklines=true,
    frame=single,
    captionpos=b
}

\textheight=22.5cm
\textwidth=14.5cm
\oddsidemargin=0.5cm
\evensidemargin=-1cm
\topmargin=-0.8cm
\usetikzlibrary{matrix, positioning}
\newtheorem{theorem}{Theorem}
\newtheorem{lemma}[theorem]{Lemma}

\definecolor{alert-red-bg}{rgb}{1, 0.85, 0.85}
\definecolor{alert-red-border}{rgb}{0.8, 0, 0}
\definecolor{warning-yellow-bg}{rgb}{1, 1, 0.8}
\definecolor{warning-yellow-border}{rgb}{0.8, 0.8, 0}
\definecolor{info-blue-bg}{rgb}{0.85, 0.9, 1}
\definecolor{info-blue-border}{rgb}{0, 0, 0.8}

\algnewcommand{\LineComment}[1]{\State \(\triangleright\) #1}

\def\F{{\mathbb F}}

\def\F{{\mathbb F}}

\def\xx{{\bf x}}

\def\00{{\bf 0}}
\def\11{{\bf 1}}
\def\+{\oplus}

\def\\{\cr}
\def\({\left(}
\def\){\right)}

\providecommand{\newoperator}[3]{%
  \newcommand*{#1}{\mathop{#2}#3}}

\newoperator{\FD}{\mathrm{FD}}{\nolimits}
\newlength{\sqsize}
\setlength{\sqsize}{0.18cm} % Adjust size of squares as needed
\usepackage[textsize=scriptsize]{todonotes}
\definecolor{violet}{rgb}{0.53, 0.0, 0.69}

\geometry{
    a4paper,
    total={170mm,257mm},
    left=20mm,
    top=20mm,
}

\begin{document}

\title{\huge\bf Extended $c$-differential distinguishers of full $9$ and reduced-round Kuznyechik cipher, no key pre-whitening}
\author{\bf\Large Pantelimon St\u anic\u a$^*$, Ranit Dutta$^{\dagger}$, Bimal Mandal$^{\dagger}$\\
\\
$^*$Naval Postgraduate School,
Applied Mathematics Department\\
Monterey, CA 93943, USA;  pstanica@nps.edu\\
\\
$^{\dagger}$Department of Mathematics, Indian Institute of Technology Jodhpur\\ 
Karwar--342030, India;
duttaranit628@gmail.com, bimalmandal@iitj.ac.in
}

\date{\today}
\maketitle
\begin{abstract}
This paper introduces {\em truncated inner $c$-differential cryptanalysis}, a technique that enables the practical application of $c$-differential uniformity to block ciphers. While Ellingsen et al. (IEEE Trans. Inf. Theory, 2020) established the notion of $c$-differential uniformity by analyzing the equation $F(x\oplus a) \oplus cF(x) = b$, a key challenge remained: the outer multiplication by $c$ disrupts the structural properties essential for block cipher analysis, particularly key addition.
We address this challenge by developing an \emph{inner} $c$-differential approach where multiplication by $c$ affects the input: $(F(cx\oplus a), F(x))$, thereby returning to the original idea of Borisov et al. (FSE, 2002). We prove that the inner $c$-differential uniformity of a function $F$ equals the outer $c$-differential uniformity of $F^{-1}$, establishing a duality between the two notions. This modification preserves cipher structure while enabling practical cryptanalytic applications.

We apply our methodology to Kuznyechik (GOST R 34.12-2015) without initial key whitening. For reduced rounds, we construct explicit $c$-differential trails achieving probability $2^{-84.0}$ for two rounds and $2^{-169.7}$ for three rounds, representing improvements of 5.2 and 4.6 bits respectively over the best classical differential trails. For the full 9-round cipher, we develop a statistical truncated $c$-differential distinguisher. Through computational analysis involving millions of differential pairs, we identify configurations with bias ratios reaching $1.7\times$ and corrected p-values as low as $1.85 \times 10^{-3}$. The distinguisher requires data complexity $2^{33}$ chosen plaintext pairs, time complexity $2^{34}$, and memory complexity $2^{16}$.
\end{abstract}

\noindent
{\bf Keywords:} $(n,m)$-function, block cipher, differential distinguisher, truncated differential distinguisher, $c$-differential 

\noindent
{\bf 
2020 Mathematics Subject Classification: 94A60, 11T71, 12E20, 68P25, 62P99.
}

\section{Introduction}
Differential cryptanalysis, introduced by Biham and Shamir \cite{BS91}, fundamentally changed the landscape of symmetric cryptography by analyzing the probability that an input difference $\Delta M$ produces a corresponding output difference $\Delta C$ in a block cipher. This seminal technique exploits non-uniform distribution of output differences for carefully chosen input differences, enabling attackers to distinguish cipher behavior from that of a random permutation. The success of differential cryptanalysis lies in its ability to trace the propagation of differences through the cipher's structure, leveraging statistical biases in the S-boxes and predictable behavior of linear operations.

Since its introduction, differential cryptanalysis has evolved into a rich family of techniques \cite{BS91, BS92, DKR97, KW02, K94, K98, LRK95, X94, Nyberg94, S98, TY15EURO}, each targeting specific cipher vulnerabilities. These variants include truncated differentials (which consider only partial information about differences), higher-order differentials (which examine derivatives of order greater than one), and impossible differentials (which exploit contradictions in difference propagation). Each extension has proven valuable for analyzing different cipher families and structural designs.

The evolution toward more sophisticated differential techniques reflects the ongoing arms race between cipher designers and cryptanalysts. Modern ciphers are explicitly designed to resist classical differential attacks through careful S-box selection, optimal diffusion layers, and sufficient round counts. This has necessitated the development of increasingly sophisticated analytical techniques capable of detecting subtle statistical deviations that classical methods might miss.

In 2002, Borisov et al. \cite{BJW02} introduced a novel perspective with multiplicative differentials for IDEA variants, studying pairs of the form $(F(cx), F(x))$ where multiplication operates in the cipher's underlying algebraic structure. This approach revealed that alternative formulations of differential relationships could expose vulnerabilities invisible to classical analysis. More recently, Baudrin et al.~\cite{Baudrin25} formalized \emph{commutative cryptanalysis}, which studies equations of the form $E_k \circ A(x) = B \circ E_k(x)$ for affine permutations $A, B$ and a block cipher $E_k$. Our inner $c$-differential framework can be viewed as a specific instance of this general theory, corresponding to $A(x) = cx \oplus a$ and $B(y) = y \oplus b$. However, our focus on truncated differentials and the statistical methodology for detecting weak biases differs from the algebraic approach of commutative cryptanalysis.

Building on this foundation, Ellingsen et al.~\cite{EST20} formalized the broader concept of $c$-differential uniformity for $p$-ary functions, introducing the $c$-derivative as a generalization of the classical derivative (see also Bartoli and Timpanella~\cite{BT19}). Their work established a comprehensive theoretical approach encompassing both classical and multiplicative differential analysis as special cases. The $c$-differential of a function $F$ quantifies how many inputs satisfy $F(x + a) - cF(x) = b$ for given differences $a$ and $b$, with the parameter $c$ providing a new degree of freedom for cryptanalytic exploration.

The theoretical richness of $c$-differential uniformity has attracted significant research attention, as evidenced by extensive subsequent work \cite{AKMRS23, BC20, BR22, EST20, HRS20, RS20, PS25, RP22, PS20, SRT20, SG21, SRT22, XD21, WLZ20, ZH20}. Researchers have investigated $c$-differential properties of various cryptographic functions, developed construction methods for functions with desired $c$-differential characteristics, and established connections to other cryptographic concepts such as boomerang uniformity and design theory. These theoretical advances have deepened our understanding of the mathematical foundations underlying differential cryptanalysis.

However, despite its theoretical promise, a fundamental obstacle has limited the application of $c$-differential analysis to real-world block ciphers: the multiplication by $c\neq 1$ disrupts the key-addition structure essential for multi-round analysis. Specifically, the key addition operation, which normally cancels out during difference computation, no longer behaves predictably when one path is multiplied by $c$. This structural incompatibility has relegated $c$-differential uniformity to primarily theoretical study.

This work addresses this structural limitation by introducing \emph{truncated $c$-differential cryptanalysis}, an approach where multiplication by $c$ affects cipher inputs rather than outputs. By reformulating the $c$-differential as $(F(cx \oplus a), F(x))$, we preserve the structural properties essential for multi-round analysis. This modification enables practical application of $c$-differential analysis to block ciphers.

We establish theoretical foundations for this approach, proving relationships between inner and outer $c$-differentials. We then develop a cryptanalytic technique and apply it to a variant of Kuznyechik \cite{gost}, the Russian Federation's current encryption standard. Kuznyechik is a modern cipher that has been analyzed in several works: AlTawy and Youssef~\cite{AY15} presented a meet-in-the-middle attack on 5-round Kuznyechik, Biryukov et al.~\cite{BKP16} gave multiset-algebraic attacks on reduced rounds, and Ishchukova et al.~\cite{Ish17} analyzed 3-round differential properties. Our analysis targets a 9-round variant without pre-whitening.

For reduced rounds, we construct explicit $c$-differential trails that outperform their classical counterparts. For two rounds, our best trail achieves probability $2^{-84.0}$ compared to $2^{-89.2}$ classically, a 5.2-bit improvement. For three rounds, we achieve $2^{-169.7}$ compared to $2^{-174.3}$, a 4.6-bit improvement. These gains arise from the high $c$-differential uniformity (64) of the Kuznyechik S-box for certain values of $c$, compared to its classical differential uniformity of 8. For the full 9-round cipher, we develop a statistical truncated $c$-differential distinguisher that reveals statistically significant deviations from randomness. These results demonstrate that $c$-differential properties can provide concrete cryptanalytic advantages and raise questions about the security margins of cipher designs against this class of attacks.

\subsection*{Contributions and Organization} 

Our primary contributions are the development of truncated $c$-differential cryptanalysis and its application to a modern cipher. Specifically, we:

\begin{itemize}
\item Introduce the inner $c$-differential methodology, addressing structural challenges that prevented practical application of (outer) $c$-differential analysis to block ciphers.
\item Establish the relationship between inner and outer $c$-differentials through a duality theorem.
\item Construct explicit $c$-differential trails for reduced-round Kuznyechik, achieving 5.2-bit and 4.6-bit improvements over classical differentials for two and three rounds, respectively.
\item Develop a statistical framework incorporating multiple testing corrections, meta-analytical techniques, and adaptive sensitivity control.
\item Present a distinguisher against full 9-round Kuznyechik using truncated $c$-differential properties.
\end{itemize}

The paper is organized as follows. Section~\ref{sec-prel} provides essential background material. In Section~\ref{sec-ccd}, we establish the theory of inner $c$-differentials and their relationship to existing concepts. Section~\ref{sec-cddt-model} develops our distinguisher methodology. Sections~\ref{sec:Kuz1} and~\ref{sec:Kuz2} present our statistical analysis of Kuznyechik. Section~\ref{sec:part} presents explicit multi-round $c$-differential trails. Section~\ref{conc} concludes with implications for cipher security.

\section{Preliminaries}
\label{sec-prel}

\subsection{Differential cryptanalysis background}

Differential cryptanalysis, introduced by Biham and Shamir~\cite{BS91}, exploits the non-uniform propagation of input differences through a cipher. An \emph{input difference} $a = x \oplus x'$ produces an \emph{output difference} $b = F(x) \oplus F(x')$ after applying a function $F$. A \emph{differential} $(a,b)$ has probability $p$ if $\Pr_x[F(x \oplus a) \oplus F(x) = b] = p$. 

In block ciphers, an S-box receiving a non-zero input difference is called \emph{active}. A \emph{differential trail} specifies the differences at each round, with overall probability being the product of individual round probabilities (under independence assumptions). \emph{Truncated differentials}~\cite{K94} consider only partial information about differences, such as which bytes are non-zero, rather than their exact values. A \emph{distinguisher} exploits a property that holds with different probability for the cipher versus a random permutation.

\subsection{Notation and algebraic preliminaries}

Let $n$ be a positive integer. We denote by $\mathbb{F}_2$ and $\mathbb{F}_{2^n}$ the prime field of characteristic $2$ and an $n$-th degree extension field over $\mathbb{F}_2$, respectively. The set of all non-zero elements of $\mathbb{F}_{2^n}$ is denoted by $\mathbb{F}_{2^n}^*$. The vector space $\mathbb{F}^n_2$ is the set of all $n$-tuples with coordinates in $\mathbb{F}_{2}$.

An element of $\mathbb{F}^n_2$ is denoted by $\xx=(x_1,x_2,\ldots,x_n)$, where $x_i\in\mathbb{F}_2$. Throughout this paper, we identify elements of $\mathbb{F}_{2^n}$ with elements of $\mathbb{F}^n_{2}$ via the standard vector space isomorphism over $\mathbb{F}_2$ induced by a choice of basis. For Kuznyechik, the relevant field is $\mathbb{F}_{2^8}$ defined by the irreducible polynomial $p(x) = x^8 + x^7 + x^6 + x + 1$; we identify bytes (8-bit vectors) with field elements using the standard polynomial basis. Similarly, the 128-bit cipher state is identified with a vector of 16 bytes in $(\mathbb{F}_{2^8})^{16}$. These identifications are implicit in all subsequent notations.

Any function from $\mathbb{F}_{2^n}$ to $\mathbb{F}_{2^m}$ is called an $(n,m)$-function or S-box, with the set of all such functions denoted by $\mathcal{B}_{n,m}$. When $m=1$, we refer to it as a Boolean function in $n$ variables.  
An $(n,n)$-function $F$ can be uniquely represented as a univariate polynomial over $\mathbb{F}_{2^n}$ of the form $F(x)=\bigoplus_{i=0}^{2^n-1} a_ix^i$, where $a_i\in\mathbb{F}_{2^n}$. The algebraic degree of $F$ is the largest Hamming weight of the exponents $i$ with $a_i\neq 0$.

The derivative of an $(n,m)$-function $F$ at $a\in\mathbb{F}_{2^n}$ is defined by $D_aF(x)=F(x\oplus a)\oplus F(x)$ for all $x\in\mathbb{F}_{2^n}$. Ellingsen et al. \cite{EST20} generalized this concept to the $c$-derivative. We refer to their formulation as the \emph{outer $c$-derivative} of $F\in\mathcal{B}_{n,m}$ at $a\in\mathbb{F}_{2^n}$, $c\in\mathbb{F}_{2^m}$:
\begin{equation*}
_cD_{a}F(x)=F(x\oplus a)\oplus cF(x)
\end{equation*}
for all $x\in\mathbb{F}_{2^n}$. When $c=1$, this reduces to the usual derivative of $F$ at $a$.

Let us define  
\begin{equation*}
_c\Delta_{F}(a,b)=\# \{x\in\mathbb{F}_{2^n}: F(x\oplus a)\oplus cF(x)=b\}
\end{equation*}
where $a\in\mathbb{F}_{2^n}$ and $b,c\in\mathbb{F}_{2^m}$. For a fixed $c\in\mathbb{F}_{2^m}$, the $c$-differential uniformity of $F\in\mathcal{B}_{n,m}$ is defined by 
\begin{equation*}
\delta(c,F)=\max\{_c\Delta_{F}(a,b): a\in\mathbb{F}_{2^n}, b\in \mathbb{F}_{2^m}, a\neq 0 \text{ if } c=1\}.
\end{equation*}

When $\delta(c,F)=\delta$, we say $F$ is differential $(c,\delta)$-uniform. In particular, $F$ is called perfect $c$-nonlinear (P$c$N) if $\delta= 1$, and almost perfect $c$-nonlinear (AP$c$N) if $\delta=2$.

\section{Inner $c$-differentials} 
\label{sec-ccd} 

We introduce the \emph{inner $c$-differentials}, which resolves structural challenges in applying $c$-differential analysis to block ciphers. Our approach shifts the multiplication by $c$ from cipher outputs to inputs, preserving the algebraic properties necessary for practical cryptanalysis.

For any $c\in\mathbb{F}_{2^n}$, in the original spirit of Borisov et al.~\cite{BJW02}, we define the \emph{inner $c$-derivative} of $F\in\mathcal{B}_{n,m}$ at $a\in\mathbb{F}_{2^n}$ as:
$D_{c, a}F(x)=F(cx\oplus a)\oplus F(x)$
for all $x\in\mathbb{F}_{2^n}$. When $c=1$, this reduces to the usual derivative; when $c=0$, we have $D_{0,a}F(x)=F(a)\oplus F(x)$; and when $a=0$, we obtain $D_{c,0}F(x)=F(cx)\oplus F(x)$.

For a fixed $c\in\F_{2^m}$, we define the \emph{inner $c$-differential} entries at $(a,b)\in\F_{2^n}\times\F_{2^m}$ (forcing $a\neq 0$ when $c=1$) by:
\begin{equation*}
\nabla_{c,F}(a,b)=\#\{x\in\mathbb{F}_{2^n}: F(cx\oplus a)\oplus F(x)=b\}.
\end{equation*}

We define $\delta_c(F)=\max\{\nabla_{c,F}(a,b) : a\in\mathbb{F}_{2^n}, b\in\mathbb{F}_{2^m}, a\neq 0 \text{ if c=1}\}$ as the maximum inner $c$-differential uniformity.

% Two $(n,n)$-functions $F$ and $F'$ from $\F_{p^n}$ to $\F_{p^n}$ are CCZ-equivalent if and only if there exist an affine permutation $A$ on $\mathbb F_{p^n}$ such that $A(G_F)=G_{F'}$, where $G_F=\{(x,F(x)):x\in\mathbb F_{p^n}\}$ and  $G_{F'}=\{(x,F'(x)):x\in\mathbb F_{p^n}\}$. Two $(n,n)$ functions $F$ and $G$ are called \emph{extended affine equivalent} (EA-equivalent) if $G=A_1\circ F\circ A_2+A_3$, where $A_1$ and $A_2$ are affine permutations of $\mathbb{F}_{p^n}$ and $A_3$ is an affine function. When $A_3$ is zero, $F$ and $G$ are called \emph{affine equivalent}. As with outer $c$-differentials, inner $c$-differential uniformity is not invariant under CCZ-equivalence, EA-equivalence, or affine equivalence (A-equivalence, i.e., EA-equivalence with $A_3 = 0$) in general, though affine equivalence is preserved when $c$ belongs to the base field.

The following theorem establishes a relationship between inner and outer $c$-differentials for permutation functions.

\begin{theorem}\label{pcn-fp}
 Let $F$ be a permutation polynomial over $\mathbb{F}_{2^n}$. Then, for any $a,b,c\in\mathbb{F}_{2^n}$ with $c\neq 0$:
 \begin{equation*}
 _c\Delta_{F}(a,b)=  \nabla_{c,F^{-1}}(b,a).
 \end{equation*}
\end{theorem}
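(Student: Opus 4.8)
The plan is to exhibit an explicit bijection between the two solution sets being counted, induced by the permutation $F$ itself. Writing out the two sets, the outer count is $_c\Delta_F(a,b)=\#\{x\in\mathbb{F}_{2^n}: F(x\oplus a)\oplus cF(x)=b\}$, while the inner count for $F^{-1}$ is $\nabla_{c,F^{-1}}(b,a)=\#\{v\in\mathbb{F}_{2^n}: F^{-1}(cv\oplus b)\oplus F^{-1}(v)=a\}$. Since $F$ permutes $\mathbb{F}_{2^n}$, the map $x\mapsto v=F(x)$ is a bijection of $\mathbb{F}_{2^n}$; I would show that it carries the outer solution set onto the inner one, which immediately forces the two cardinalities to agree.

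First I would start from an $x$ in the outer set and set $v=F(x)$, so that $x=F^{-1}(v)$. Substituting $F(x)=v$ into the defining equation rewrites $F(x\oplus a)\oplus cF(x)=b$ as $F(x\oplus a)=b\oplus cv$. Applying $F^{-1}$ to both sides gives $x\oplus a=F^{-1}(b\oplus cv)$, hence $x=F^{-1}(b\oplus cv)\oplus a$. Equating this with $x=F^{-1}(v)$ and rearranging yields $F^{-1}(b\oplus cv)\oplus F^{-1}(v)=a$. Using commutativity of $\oplus$ to write $b\oplus cv=cv\oplus b$, this is exactly the inner condition $F^{-1}(cv\oplus b)\oplus F^{-1}(v)=a$, so $v$ lies in the inner solution set.

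The reverse direction runs the same computation backwards: starting from a $v$ satisfying the inner equation, set $x=F^{-1}(v)$, recover $F(x\oplus a)=b\oplus cv=cF(x)\oplus b$, and conclude $F(x\oplus a)\oplus cF(x)=b$. Thus $x\mapsto F(x)$ restricts to a bijection between the two sets, and the two counts coincide. I expect no substantive obstacle here; the only points requiring care are purely bookkeeping. One must track that every step is genuinely reversible, which holds because $F$ and $F^{-1}$ are honest inverse bijections and $\oplus$ is its own inverse, so multiplying by $c$ and adding $b$ are the only operations to undo. One should also note that the hypothesis $c\neq 0$ is all the argument actually uses, while the usual convention of excluding $a=0$ when $c=1$ affects only which entries enter the uniformity maximum, not the counting identity itself.
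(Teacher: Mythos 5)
Your proposal is correct and is essentially the same argument as the paper's: both apply $F^{-1}$ to the defining equation and use the substitution $v = F(x)$ (a bijection since $F$ is a permutation) to transform the outer solution set into the inner one for $F^{-1}$. The paper writes this as a single chain of set equalities, whereas you verify the bijection in both directions explicitly, but the underlying reasoning is identical.
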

\begin{proof}
For any $a,b,c\in\mathbb{F}_{2^n}$ with $c\neq 0$:
  \begin{align*}
_c\Delta_{F}(a,b)&=\#\{x\in\mathbb{F}_{2^n}: F(x\oplus a)\oplus cF(x)=b\}\\
&=\#\{x\in\mathbb{F}_{2^n}: F^{-1}(cF(x)\oplus b)=x\oplus a\}\\
&=\#\{y\in\mathbb{F}_{2^n}: F^{-1}(cy\oplus b)\oplus F^{-1}(y)=a\},
\end{align*}
where we substitute $y:=F(x)$. Thus:
$_c\Delta_{F}(a,b)=\#\{x\in\mathbb{F}_{2^n}: F^{-1}(cx\oplus b)\oplus F^{-1}(x)=a\}=\nabla_{c,F^{-1}}(b,a).$
\end{proof}

This theorem shows that the inner $c$-differential uniformity of $F$ equals the outer $c$-differential uniformity of its inverse. For involutions (i.e., functions satisfying $F = F^{-1}$), we have $_c\Delta_{F}(a,b)= \nabla_{c, F}(b,a)$.

The inner $c$-differential approach preserves essential structural properties required for block cipher cryptanalysis, unlike the outer variant. The proof is immediate.

\begin{lemma}[Structural Preservation]
\label{structural-preservation}
Let $a, c \in \mathbb{F}_{2^n}$ with $c \ne 0$, and let $F: \mathbb{F}_{2^n} \to \mathbb{F}_{2^n}$ be any function. For any $x \in \mathbb{F}_{2^n}$, define the inner $c$-differential output $b = F(cx \oplus a) \oplus F(x)$. Then:
\begin{enumerate}
    \item[(i)] For any $\mathbb{F}_2$-linear map $L:\mathbb{F}_{2^n} \to \mathbb{F}_{2^m}$, we have
    $$L(F(cx \oplus a)) \oplus L(F(x)) = L(b).$$
    \item[(ii)] For any affine map $A(y) = L(y) \oplus k$ where $L$ is $\mathbb{F}_2$-linear and $k \in \mathbb{F}_{2^m}$ is a constant, 
    $$A(F(cx \oplus a)) \oplus A(F(x)) = L(b).$$
\end{enumerate}
\end{lemma}

\begin{proof}
Let $y_1 = F(cx \oplus a)$ and $y_2 = F(x)$, so $b = y_1 \oplus y_2$.

\noindent $(i)$ Since $L$ is $\mathbb{F}_2$-linear, we have $L(u \oplus v) = L(u) \oplus L(v)$ for all $u,v$. Thus,
$$L(y_1) \oplus L(y_2) = L(y_1 \oplus y_2) = L(b).$$

\noindent $(ii)$ For the affine map $A(y) = L(y) \oplus k$,
$$A(y_1) \oplus A(y_2) = (L(y_1) \oplus k) \oplus (L(y_2) \oplus k) = L(y_1) \oplus L(y_2) = L(b).$$
The constant $k$ cancels in the XOR of the two encryption paths.
\end{proof}

This lemma demonstrates that inner $c$-differentials preserve the fundamental algebraic properties required for differential cryptanalysis of block ciphers.

% \subsection*{Cryptanalytic Variant: No Pre-whitening}
% The removal of initial key whitening is cryptanalytically justified as it preserves the algebraic relationship between differential pairs $(cx \oplus a, x)$. With pre-whitening, both inputs become $(cx \oplus a \oplus K_0, x \oplus K_0)$, maintaining the essential differential structure while simplifying statistical analysis. This modification represents a standard cryptanalytic approach for isolating the core cipher transformations from key-dependent operations.

\section{Truncated $c$-differential distinguisher methodology}
\label{sec-cddt-model}

Building on the theoretical foundation of Section~\ref{sec-ccd}, we develop a practical distinguisher targeting differential equations of the form $E_K(x) \oplus E_K(c \cdot x \oplus a) = b$, where $E_K$ denotes the block cipher encryption function under key $K$. Our approach restricts inner $c$-differentials to the first round, with subsequent rounds following classical differential propagation.

\subsection{Truncated model for practical analysis}

We now describe our model for an inner $c$-differential distinguisher, building upon the theoretical considerations of Section~\ref{sec-ccd}. Our aim is to investigate whether certain $c$ values yield distinguishers with higher differential probability than classical differential distinguishers. Two motivations underlie this direction:
\begin{itemize}
\item[--] Firstly, $\nabla_{c, F}(a,b)$ may have higher entries (in one or more pairs $(a,b)$) than $\#\{x\in\mathbb F_{2^n}: F(x\oplus a)\oplus F(x)=b\}$ for some cipher's S-boxes $F$ and some $c\in\mathbb{F}_{2^n}\setminus\{0,1\}$ values. 
\item[--] If we model the cipher's behavior based on the $c$-differential distribution, we may identify distinguishers with higher success probabilities than classical differentials.
\end{itemize}

To move from vector space representations to finite field arithmetic, we adopt a standard basis transformation via a fixed irreducible (or primitive) polynomial. In classical differential attacks (illustrated in Figure~\ref{fig:classical-diff}), a plaintext difference is selected such that the resulting ciphertext difference propagates through the cipher with probability $2^{-p}$. The attacker then uses this structure to distinguish the cipher from a random permutation.

In our model, we introduce a modified distinguisher with a potentially higher probability of success. Assume the block cipher consists of multiple rounds, each comprising key addition, an S-box layer, and a linear diffusion (permutation) layer. Let the cipher state be of size $\ell$, structured into $s$ S-boxes of size $n$ each (i.e., $\ell = n \times s$). Let the S-box function be denoted by  
$
F: \mathbb{F}_{2}^n \to \mathbb{F}_{2}^n.
$
\begin{figure}[H]
\centering
\begin{tikzpicture}[>= stealth]
% Original input
\node (P) at (0,0) {$(P_i^1, P_i^2, \ldots, P_i^s)$};

% Difference arrow and modified input
\node (diff1) at (6,0.3) {$A$};
\node (Pdiff) at (12,0) {$(P_i^1 \oplus A^1, \ldots, P_i^s \oplus A^s)$};

\draw[->] (P) --  node[below] {\small Chosen input difference} (Pdiff);

% Downward arrows to S-box
\node at (-1,-1.5) {\small S-box layer};
\node at (11,-1.5) {\small S-box layer};

\draw[->] (P) -- (0,-2.5);
\draw[->] (Pdiff) -- (12,-2.5);

% After S-box application
\node (FP) at (0,-3) {$(F(P_i^1), \ldots, F(P_i^s))$};
\node (FPdiff) at (11.6,-3) {$(F(P_i^1 \oplus A^1), \ldots, F(P_i^s \oplus A^s))$};

% Difference below S-box
\draw[->] (FP) -- node[above] {\small DDT-based propagation} node[below] {\small Resulting output difference} (FPdiff);

\end{tikzpicture}
\caption{The idea of a classical differential. Here $P_i = (P_i^1, \ldots, P_i^s)$ denotes a plaintext block of $s$ words, $A$ the input difference, and $F$ the S-box function applied to each word.}
\label{fig:classical-diff}
\end{figure}

In our setup, we choose a constant $c \in \mathbb{F}_2^\ell$ and an input difference $A \in \mathbb{F}_2^\ell$ of the same dimension as the cipher state. Given a plaintext $P_i$, we construct a paired plaintext $c\cdot P_i \oplus A$, where $c\cdot P_i$ denotes component-wise multiplication in $\mathbb F_{2^n}$. Let us denote:  
$$
P_i = (P_i^1, P_i^2, \ldots, P_i^s), \quad A = (A^1, A^2, \ldots, A^s),
$$
where each $P_i^k, A^k \in \mathbb{F}_2^n$ for $1 \leq k \leq s$. Similarly, let  
$
c = (c^1, c^2, \ldots, c^s),
$
with each $c^k \in \mathbb{F}_2^n$.
We restrict the inner $c$-differential only to the first round; subsequent rounds follow classical differential propagation (i.e., with $c=1$). Concretely, after the first S-box layer produces an output difference $B$ via the inner $c$-differential, this difference $B$ becomes the input difference for a standard differential trail through rounds 2 to $r$. The overall probability is the product of the first-round $c$-differential probability (computed from the $c$-DDT) and the subsequent classical differential probabilities (computed from the standard DDT). In our truncated analysis, we do not track specific multi-round trails; instead, we observe whether the output differences after all rounds exhibit statistical bias compared to a random permutation. For our analysis, we apply input and output masks that select specific byte positions (see Section~\ref{sec:statistical-framework}). We define an ``active'' input difference $a$ as one where the masked difference is non-zero. The case where the masked input difference is zero ($a=0$) is ignored in our statistical calculations, as it does not provide useful information for this type of distinguisher. Assume that the 1st, 2nd, and $r$th S-boxes are active in the first round. Then for all $j \notin \{1,2,r\}$, $A^j = 0$. Accordingly, the structure of $c$ is chosen such that it modifies only the corresponding active S-box positions, i.e., each $c^j= (0,0,\ldots, 1)\in\mathbb{F}^n_2$ for all $j \notin \{1,2,r\}$. Denoting the field representation of vectors $c^m, A^m, P_i^m \in \mathbb{F}_2^n$ as $c^{m'}, A^{m'}, P_i^{m'} \in \mathbb{F}_{2^n}$ for each $m\in\{1,2,\ldots, s\}$, we compute:
$$
c^{m'} P_i^{m'} \oplus A^{m'} \in \mathbb{F}_{2^n}.
$$
This expression is then converted back to the corresponding $n$-bit vector representation. The outputs of each S-box for both plaintexts are of the form:
$$
F(P_i^{m'}), \quad F(c^{m'} P_i^{m'} \oplus A^{m'}).
$$

The probability that this pair produces output difference $b^{m'}\in\mathbb{F}^n_2$ after the S-box is given by:
$$
p_{m'} = \frac{\#\{x \in \mathbb{F}_2^n : F(c^{m'}x \oplus A^{m'}) \oplus F(x) = b^{m'} \}}{2^n},
$$
which can be calculated from $\nabla_{c^{m'}, F}(A^{m'}, b^{m'})$, and using Theorem~\ref{pcn-fp} from $_c\Delta_{F^{-1}}(b^{m'}, A^{m'})$.

Thus, the output after the first round S-box layer for the two plaintexts $P_i, c\cdot P_i\oplus A$ is:
$$
\begin{aligned}
& (F(P^1_i), F(P^2_i), \ldots, F(P^{r-1}_i), F(P^r_i), \ldots, F(P^s_i)), \\
& (F(c^1 P^1_i \oplus A^1), F(c^2 P^2_i \oplus A^2), \ldots, F(P^{r-1}_i), F(c^r P^r_i \oplus A^r), \ldots, F(P^s_i)).
\end{aligned}
$$

Let $b^m \in \mathbb{F}_2^n$ be the bit vector corresponding to $b^{m'}$. Then the full state output difference after the S-box layer is
$$
B = (b^1, b^2, \ldots, b^s) \in \mathbb{F}_2^\ell.
$$
Hence, the total probability of the transition $(c, A) \rightarrow B$ through the S-box layer (illustrated in Figure~\ref{fig:c-diff}) is:
$$
p_1 \cdot p_2 \cdot p_r.
$$

By Lemma~\ref{structural-preservation}, this probability remains unchanged across the key addition and linear layer in the first round. From the second round onward, we apply standard differential trail propagation techniques. In the next section, we demonstrate this model through a concrete example.

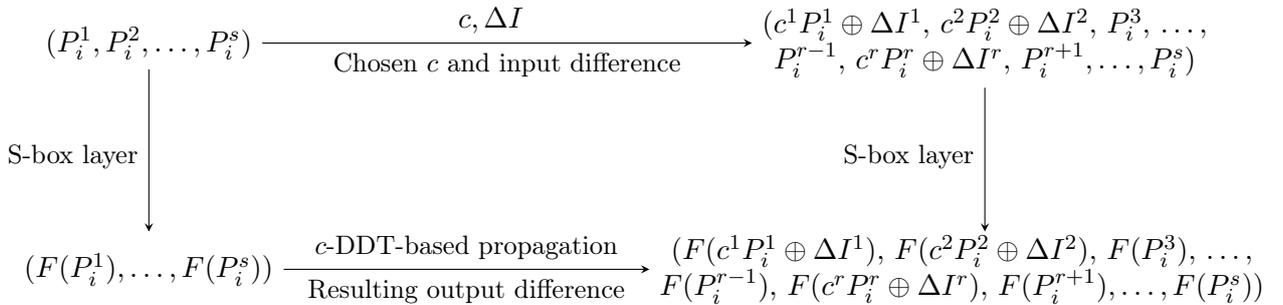
\begin{figure}[H]
\centering
\begin{tikzpicture}[>=stealth]
% Original input
\node (P) at (0,0) {$(P_i^1, P_i^2, \ldots, P_i^s)$};

% Difference arrow and modified input (split into two lines)
\node (Pdiff) at (11,0) [align=center] {
  $(c^1P_i^1 \oplus A^1,\, c^2P_i^2 \oplus A^2,\, P_i^3,\, \ldots,$ \\
  $P_i^{r-1},\, c^rP_i^r \oplus A^r,\, P_i^{r+1}, \ldots, P_i^s)$
};

\node at (4.5,0.3) {$c, A$};
\draw[->] (P) -- node[below] {\small Chosen $c$ and input difference} (Pdiff);

% Downward arrows to S-box
\node at (-1,-1.5) {\small S-box layer};
\node at (10,-1.5) {\small S-box layer};

\draw[->] (P) -- (0,-2.5);
\draw[->] (Pdiff) -- (11,-2.5);

% After S-box application
\node (FP) at (0,-3) {$(F(P_i^1), \ldots, F(P_i^s))$};
\node (FPdiff) at (10.7,-3)
[align=center] {$(F(c^1P_i^1 \oplus A^1),\, F(c^2P_i^2 \oplus A^2),\, F(P_i^3),\, \ldots,$ \\
  $F(P_i^{r-1}),\, F(c^rP_i^r \oplus A^r),\, F(P_i^{r+1}), \ldots, F(P_i^s))$
};

\draw[->] (FP) -- node[above] {\small $c$-DDT-based propagation} node[below] {\small Resulting output difference} (FPdiff);

\end{tikzpicture}
\caption{The idea of an inner $c$-differential.}
\label{fig:c-diff}
\end{figure}

\subsection{Statistical methodology}
\label{sec:statistical-framework}

Our analysis employs a rigorous statistical approach designed to detect non-random behavior while controlling false discovery rates.

\subsubsection{Null hypothesis and expected probability}

Our statistical model is predicated on the null hypothesis, $H_0$, which posits that for a fixed key $K$ and constant $c$, the encryption function $E_K$ behaves as a random permutation with respect to the $c$-differential property. Under this hypothesis, any output difference $b$ is equally likely for any given input difference $a$.

We use a truncated model where input and output masks, $M_I, M_O$, define the active bytes being observed. Let $k_a = |M_I|$ and $k_b = |M_O|$ be the number of active bytes in the input and output masks, respectively. The size of the input difference space is $|\mathcal{S}_I| = 2^{8k_a}$, and the size of the output difference space is $|\mathcal{S}_O| = 2^{8k_b}$. Since the input difference $a$ must be non-zero, the number of possible input differences is $|\mathcal{S}_I|-1$.

Under $H_0$, we model the theoretical probability of observing any specific input-output differential pair $(a,b)$, where $a \in \mathcal{S}_I \setminus \{0\}$ and $b \in \mathcal{S}_O$, as approximately
\begin{equation*} \label{eq:pexp}
P_{\text{exp}} = \frac{1}{(|\mathcal{S}_I| - 1) \cdot |\mathcal{S}_O|} = \frac{1}{(2^{4k_a} - 1) \cdot 2^{4k_b}}.
\end{equation*}
We note that for $c=1$ and permutations, certain pairs $(a,b)$ are impossible (e.g., $b=0$ for $a \neq 0$), so the uniform assumption is an approximation. However, for our truncated model with $c \neq 1$, the output space is not subject to these constraints, and the uniform approximation serves as a reasonable baseline for detecting bias. The statistical tests aim to find instances where the observed probability, $P_{\text{obs}}(a,b) = \frac{\mathcal{F}(a,b)}{N_{trials}}$, significantly deviates from $P_{\text{exp}}$, where $\mathcal{F}(a,b)$ denotes the observed count for the differential pair $(a,b)$ and $N_{trials}$ is the total number of trials conducted.

\subsubsection{Significance, power, and multiple testing}

Given the vast number of possible differential pairs being tested simultaneously (potentially millions), a critical challenge is controlling the rate of false positives that arise from the multiple testing problem, often called the ``look-elsewhere effect''. The approach addresses this with modern correction methods and adaptive significance thresholds.

\subsubsection{Multiple testing corrections}

The program employs several established methods to adjust p-values, ensuring statistical rigor:
\begin{itemize}
\item \textbf{False Discovery Rate (FDR)~\cite{BenjaminiHochberg1995, Storey2003}:} The primary method used is the Benjamini-Hochberg procedure, which controls the expected proportion of false discoveries among all rejected null hypotheses. A pair is considered significant if its FDR-corrected p-value is less than a given significance level $\alpha$.
\item \textbf{Family-wise Error Rate (FWER)~\cite{Bonferroni1936, Holm1979}:} For more conservative control, we implement Bonferroni and Holm corrections, which control the probability of making even one false discovery.
\end{itemize}

\subsubsection{Adaptive significance threshold~\cite{Efron2010}}

To balance statistical power with control over false positives, the significance level $\alpha$ is dynamically adjusted based on the number of rounds being tested and the characteristics of the collected data. The formula for the adaptive threshold is:
\begin{equation*}
\alpha_{\text{adaptive}} = \alpha_{\text{base}} \cdot \left(1 + \eta \cdot \frac{\text{IQR}}{\sqrt{n}}\right) \cdot \left(1 + \max(0, (r - 5) \cdot 0.1)\right),
\end{equation*}
where $\alpha_{\text{base}}$ is a base significance level (e.g., 0.05 to 0.005), $\eta$ is a scaling factor (implemented as 0.1), IQR is the interquartile range of the observed differential counts, $n$ is the number of unique differential pairs found, and $r$ is the number of encryption rounds. This allows the analysis to be more permissive for higher rounds, where biases are expected to be weaker, and adapts to the noisiness of the observed data.

\begin{algorithm}[H]
\caption{Adaptive Threshold Calculation}
\begin{algorithmic}[1]
\Require Observed counts $\{N_i\}$, base significance level $\alpha_0$, round number $r$
\State $\text{IQR} \gets Q_{75}(\{N_i\}) - Q_{25}(\{N_i\})$
\State $n \gets |\{i : N_i > 0\}|$
\State $\eta \gets 0.1$ \hfill {Empirically determined scaling factor}
\State $\alpha_{adaptive} \gets \alpha_0 \cdot (1 + \eta \cdot \text{IQR}/\sqrt{n}) \cdot (1 + \max(0, (r-5) \cdot 0.1))$\\
\Return $\min(\alpha_{adaptive}, 0.15)$ \hfill {Cap at maximum threshold}
\end{algorithmic}
\end{algorithm}

\subsubsection{Enhanced bias and distribution metrics}
We calculate multiple indicators of bias to provide a multi-faceted view of any potential statistical anomalies. Let $(a,b)$ denote a differential pair where $a$ is the input difference and $b$ is the output difference, and let $N_{trials}$ denote the total number of trials conducted.
\begin{enumerate}
\item \textbf{Bias Ratio:} The most direct measure of bias is the ratio of observed to expected counts. For a pair $(a,b)$ observed $N_{obs}$ times over $N_{trials}$, the bias is:
    $$ \text{Bias}(a,b) = \frac{P_{\text{obs}}}{P_{\text{exp}}} = \frac{N_{obs} / N_{trials}}{P_{\text{exp}}}, $$
    where $P_{\text{exp}} = \frac{1}{|\mathcal{D}_{in} \setminus \{0\}| \cdot |\mathcal{D}_{out}|}$ is the expected probability under the uniform distribution assumption, with $\mathcal{D}_{in}$ and $\mathcal{D}_{out}$ representing the input and output difference spaces, respectively.
\item \textbf{Kullback-Leibler (KL) Divergence~\cite{KullbackLeibler1951}:} To measure how the entire observed probability distribution $P$ diverges from the expected uniform distribution $Q$, we calculate the KL divergence:
\begin{equation}
    D_{KL}(P || Q) = \sum_{i} p_i \log\left(\frac{p_i}{q_i}\right),
\end{equation}
    where the sum is over all possible differential pairs, $p_i$ is the observed probability for pair $i$, and $q_i$ is the expected probability, $P_{\text{exp}}$.
\item \textbf{Chi-square Statistics~\cite{Pearson1900}:} Both the maximum and aggregate chi-square ($\chi^2$) values are computed to assess deviation. For a single pair $(a,b)$, the $\chi^2$ contribution is:
$$ \chi^2(a,b) = \frac{(N_{obs} - N_{exp})^2}{N_{exp}}, $$
where $N_{exp} = N_{trials} \cdot P_{\text{exp}}$. 
Our analysis reports both the maximum individual $\chi^2$ value and the sum across all pairs.
\end{enumerate}

\subsubsection{Meta-analysis and advanced detection techniques}

To detect systematic weaknesses that might not be apparent from single tests, we use several meta-analytical techniques.
\begin{enumerate}
   \item \textbf{Sequential Probability Ratio Test (SPRT)~\cite{Wald1947}:} For efficient discovery, the Wald's SPRT (see also, Figure~\ref{fig:sprt_discovery}) seemed appropriate. It computes the cumulative log-likelihood ratio $\Lambda_n$ after $n$ trials:
\begin{equation*}
\Lambda_n = \sum_{i=1}^{n} \log\left(\frac{\mathcal{L}(x_i|\theta_1)}{\mathcal{L}(x_i|\theta_0)}\right),
\end{equation*}
where $\theta_0$ represents the null hypothesis parameter (differential probability $= P_{\text{exp}}$) and $\theta_1$ represents the alternative hypothesis parameter (differential probability $> P_{\text{exp}}$, indicating bias). The likelihood functions $\mathcal{L}(x_i|\theta_1)$ and $\mathcal{L}(x_i|\theta_0)$ evaluate the probability of observation $x_i$ under each hypothesis.
The test stops if $\Lambda_n$ crosses predefined boundaries $A = \log\left(\frac{1-\beta}{\alpha}\right)$ or $B = \log\left(\frac{\beta}{1-\alpha}\right)$, where $\alpha$ and $\beta$ are the desired Type I and Type II error rates. This can dramatically reduce computation when strong biases are present.

    \item \textbf{Clustering and Combined p-Values~\cite{Fisher1925}:} We perform hierarchical clustering on discovered differential patterns to group related biases. To assess the collective significance of a cluster of $k$ patterns with individual p-values $p_1, \dots, p_k$, Fisher's method~\cite{Fisher1925} is applied:
    \begin{equation*}
    \chi^2_{2k} = -2\sum_{i=1}^{k} \ln(p_i).
    \end{equation*}
    The resulting $\chi^2_{2k}$ value follows a chi-square distribution with $2k$ degrees of freedom, yielding a single, combined p-value for the entire cluster. This same method is used to aggregate evidence from multiple weak signals that fall into predefined categories (e.g., ``moderate bias'').
    
    \item \textbf{Bias Persistence Anomaly:} 
    We introduce a bias persistence anomaly detection method that compares observed differential biases against an expected exponential decay model. Based on the well-established principle that differential probabilities decrease exponentially with rounds~\cite{NybergKnudsen1995, DR02} (for example, \cite{DR02} argues that for AES, the expected bias after $r$ rounds is below $2^{-6r}$), we model the expected bias after $r$ rounds as   \begin{equation*}
    \text{Expected Bias}(r) = 2^{-r/3}.
    \end{equation*}
    (this could be recalibrated for different ciphers), and flag instances where the observed bias exceeds this expectation by a factor of 5 or more.
    A large deviation from this model may indicate a structural property of the cipher that slows down the destruction of differential correlations.
\end{enumerate}

For many of our configurations, we get a ``Sequential early discovery'' in just 500,000 trials, a fraction of the total planned experiments. The plot below shows the Log-Likelihood Ratio (LLR) for a specific differential accumulating over batches of trials. We  can instantly see the LLR value crossing the upper decision boundary, triggering an ``early discovery'' and stopping the test.
{\footnotesize
\begin{figure}[h!]
\centering
\begin{tikzpicture}
\begin{axis}[
    title={\textbf{SPRT Early Discovery for a Single Differential}},
    xlabel={Number of Trials},
    ylabel={Log-Likelihood Ratio (LLR)},
    xmin=0, xmax=600000,
    width=0.95\textwidth,
    height=0.5\textwidth,
    legend pos=south east,
    ymajorgrids=true,
    axis on top,
]

% LLR Data - hypothetical path based on your logs
% It starts near 0 and crosses the threshold around 500k trials
\addplot[
    color=blue,
    mark=*,
    mark size=2pt,
    line width=1.5pt,
]
coordinates {
    (0, 0)
    (100000, 1.2)
    (200000, 2.1)
    (300000, 3.5)
    (400000, 4.8)
    (500000, 6.9)
};
\addlegendentry{LLR of Differential}

% Acceptance Threshold (Upper Boundary)
% From your code: log((1-beta)/alpha) with alpha=0.05, beta=0.2 -> log(0.8/0.05) = log(16) ~ 2.77
\addplot[
    red,
    dashed,
    line width=1.2pt,
    domain=0:600000,
]
{2.77};
\addlegendentry{Acceptance Boundary (H$_1$)}

% Rejection Threshold (Lower Boundary)
% From your code: log(beta/(1-alpha)) -> log(0.2/0.95) ~ -1.56
\addplot[
    black!50!green,
    dashed,
    line width=1.2pt,
    domain=0:600000,
]
{-1.56};
\addlegendentry{Rejection Boundary (H$_0$)}

% Annotation
\node[pin={[pin edge={latex-, thick, black}, text=black]-45:{\parbox{1.6cm}{\centering \textbf{\scriptsize Early Discovery!}\\ {\scriptsize  Test stopped at 500k trials}}}}] at (axis cs:500000,6.9) {};

\end{axis}
\end{tikzpicture}
\caption{An example of the Sequential Probability Ratio Test (SPRT) implemented in the approach. The test stops as soon as the Log-Likelihood Ratio (LLR) of an observed differential crosses the upper acceptance boundary, confirming a statistically significant bias long before the maximum number of trials is reached.}
\label{fig:sprt_discovery}
\end{figure}
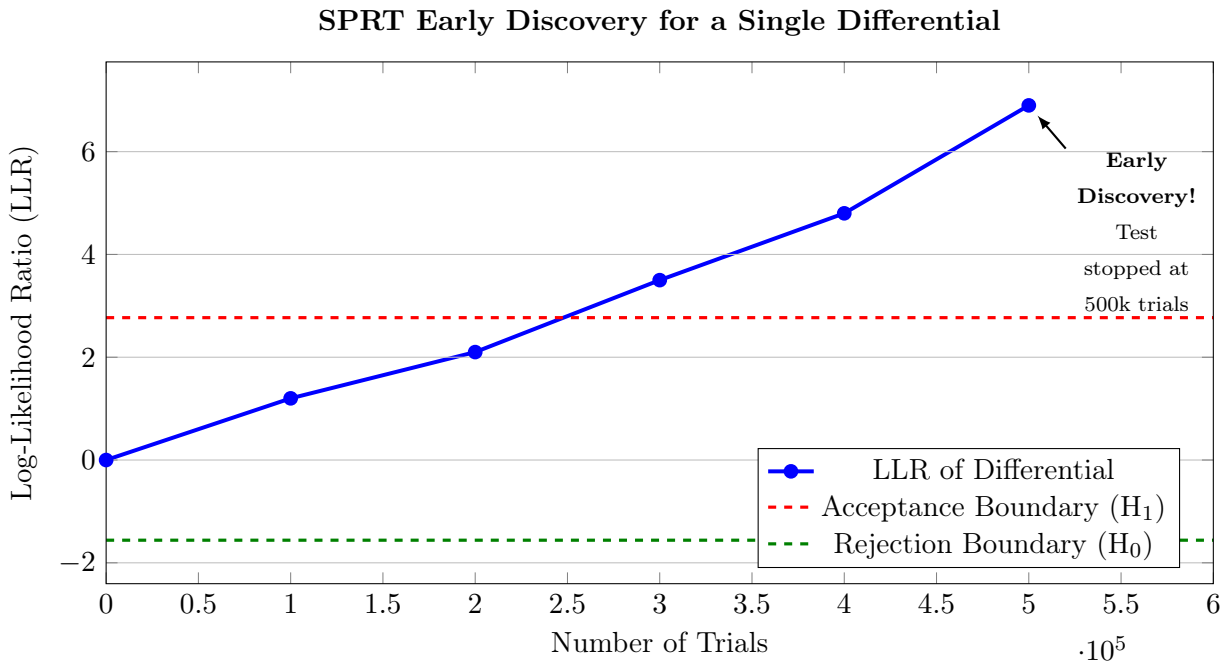
}

\section{Statistical $c$-differential uniformity analysis for Kuznyechik}
\label{sec:Kuz1}

This section presents our computational methodology for analyzing $c$-differential uniformity properties of the Kuznyechik block cipher. We provide in the repository~\cite{github} the Python (with SageMath) program and the computation logs for each of the two random seeds (37, 42), with summaries of the top distinguishers found at the end of each file. We also include a verification script \texttt{verify\_distinguisher.py} to verify any distinguisher with user-specified parameters.

\subsection{Target cipher}

The Kuznyechik block cipher, standardized as GOST R 34.12-2015 (RFC 7801), is a 128-bit block cipher with a 256-bit key employing a Substitution-Permutation Network (SPN) structure. All arithmetic is performed over $\mathbb{F}_{2^8}$, and the 128-bit state is treated as a vector in $(\mathbb{F}_{2^8})^{16}$.

An encryption round consists of three transformations: SubBytes~(S), Linear Transformation~(L), and AddRoundKey~(X). For a state $Y_{j-1}$ and round key $K_j$, round $j$ is defined as:
\begin{equation*}
Y_j = X_{K_j} \circ L \circ S (Y_{j-1}) = L(S(Y_{j-1})) \oplus K_j.
\end{equation*}
The full encryption consists of an initial key addition ($Y_0 = \text{Plaintext} \oplus K_0$) followed by 9 rounds using keys $K_1, \ldots, K_9$.

The S-layer applies a fixed $8 \times 8$ S-box, $\pi$, to each byte of the state independently; this is the sole nonlinear component. The L-layer provides diffusion across the 128-bit state via a linear transformation $L: (\mathbb{F}_{2^8})^{16} \to (\mathbb{F}_{2^8})^{16}$ built by iterating a simpler function $R$ sixteen times. The key schedule generates ten 128-bit round keys from the 256-bit master key using a Feistel-like structure. Further details appear in Appendix~A.

\subsection{Truncated $c$-differential analysis}

Our approach targets the $c$-differential equation
\begin{equation*}
E_K(x) \oplus E_K(c \cdot x \oplus a) = b,
\end{equation*}
where $E_K$ is the encryption function under key $K$, $c \in \mathbb{F}_{2^8}$ is a constant, $a$ is an input difference, $b$ is an output difference, and $c \cdot x$ denotes component-wise multiplication in $\mathbb{F}_{2^8}$. The goal is to find tuples $(c, a, b)$ for which this equation holds with probability significantly different from random.

To make the analysis tractable, we employ a truncated model focusing on specific byte positions. Let $M_I, M_O \subseteq \{0, 1, \dots, 15\}$ specify the active bytes for input and output, respectively.\footnote{The implementation supports finer nibble-level granularity, but all experiments use byte-aligned masks.} Let $\pi_M$ be a projection function that zeroes out all bytes not in $M$. The input and output difference spaces are $\mathcal{S}_I = \{\pi_{M_I}(x) : x \in \mathbb{F}_2^{128}\}$ and $\mathcal{S}_O = \{\pi_{M_O}(x) : x \in \mathbb{F}_2^{128}\}$. In our experiments, both $M_I$ and $M_O$ specify a single active byte.

Given an output difference vector $\Delta y \in \mathbb{F}_{2^8}^{16}$ and a target set of active positions $\mathcal{A} \subseteq \{0, \dots, 15\}$, we say that $\Delta y$ matches the truncated output pattern if $\forall i \in \mathcal{A}, \Delta y_i \neq 0$. That is, all positions marked active must be non-zero in the output difference, regardless of their exact values.

\subsection{Experimental configuration}

The set of constants $c$ chosen for this study was $\{\texttt{0x01}, \texttt{0x02}, \texttt{0x03}, \texttt{0x04}, \texttt{0x91}, \texttt{0xbe}, \texttt{0xe1}\}$. We included $c=\texttt{0x01}$ as a benchmark (classical differential). The other values were chosen because the $c$-differential uniformity of the Kuznyechik S-box is high for these constants: 64 for \texttt{0x02} and \texttt{0xe1}, 33 for \texttt{0x91}, and 21 for \texttt{0x03}, \texttt{0x04}, and \texttt{0xbe}. The full inner/outer $c$-differential uniformity tables appear in Appendix~B.

A total of 35 mask configurations were tested, selected from a larger initial set based on preliminary analysis. For each configuration tuple $(c, M_I, M_O)$, we performed $5 \times 10^6$ trials. This trial count balances statistical power with computational feasibility. Full-state analysis remains infeasible given the $2^{128}$ state space.

The experimental procedure follows Algorithm~\ref{alg:montecarlo_detailed}: generate random plaintexts and input differences, apply masks, construct paired plaintexts via the $c$-differential relation, encrypt both through the specified number of rounds, and record the output difference frequency. Trials where the masked input difference is zero are skipped, as they provide no cryptanalytic information.

\begin{algorithm}[h!]
\caption{Monte Carlo Simulation for Truncated $c$-Differential Analysis}
\label{alg:montecarlo_detailed}
\begin{algorithmic}[1]
\Procedure{AnalyzeConfiguration}{$c, M_I, M_O, K, N_{trials}$}
    \State Initialize frequency map $\mathcal{F}(a, b) \to 0$ for all valid $a, b$.
    \State Let $E_K$ be the encryption function with key $K$.
    \State Let $\pi_{M}$ be the projection function for mask $M$.
    \For{$i = 1$ to $N_{trials}$}
        \State $x \xleftarrow{R} (\mathbb{F}_{2^8})^{16}$ \hfill{Random 128-bit plaintext}
        \State $a_{rand} \xleftarrow{R} (\mathbb{F}_{2^8})^{16} \setminus \{0\}$ \hfill{Random non-zero difference}
        \State $a \gets \pi_{M_I}(a_{rand})$ \hfill{Apply input mask}
        \If{$a = 0$} \textbf{continue} \EndIf
        \State $x' \gets c \cdot x \oplus a$ \hfill{Construct paired plaintext}
        \State $b \gets \pi_{M_O}(E_K(x) \oplus E_K(x'))$ \hfill{Masked output difference}
        \State $\mathcal{F}(a, b) \gets \mathcal{F}(a, b) + 1$
    \EndFor
    \State \Return $\mathcal{F}$
\EndProcedure
\end{algorithmic}
\end{algorithm}

\subsection{Implementation}

The code (available at~\cite{github}) is implemented in Python with SageMath for algebraic setup. All performance-critical operations use native integers (0--255) with precomputed lookup tables. Addition in $\mathbb{F}_{2^8}$ is implemented as bitwise XOR, and multiplication via a $256 \times 256$ lookup table. The L-transformation is fully precomputed: a table \texttt{L\_TABLE} of size $16 \times 256 \times 16$ stores the output of $L$ for an input with value $v$ at byte position $i$ and zeros elsewhere. This reduces the L-transformation to 16 lookups and XORs. An analogous table is computed for $L^{-1}$.

Experiments were parallelized using Python's \texttt{multiprocessing} module, with the master process distributing precomputed tables to workers executing trials in parallel. Experiments were run on a 10-core M1-Max MacBook Pro with 64GB RAM. Implementation correctness was validated against RFC 7801 test vectors for both key schedule generation and encryption.

\subsection{Statistical methodology}

Our adaptive significance technique builds upon established FDR methodology~\cite{BenjaminiHochberg1995,Storey2003,Efron2010}. The bias persistence model follows cryptanalytic convention where differential probabilities decay exponentially as $2^{-cr}$ with cipher-specific constants~\cite{NybergKnudsen1995,DR02}. For Kuznyechik, we model decay with $c=1/3$ based on its 9-round structure.

The analysis employs several validation procedures following Section~\ref{sec:statistical-framework}. Distribution analysis uses Anderson-Darling tests to assess normality of observed count distributions. Goodness-of-fit testing via chi-square and G-tests evaluates uniformity of output differential distributions, with a global non-randomness anomaly flagged when the G-test p-value falls below $10^{-3}$.

To detect systematic weaknesses that manifest as multiple weak biases rather than individual strong signals, we aggregate evidence using Fisher's combined probability test~\cite{Fisher1925}. Evidence is categorized into four classes: strong bias patterns (ratio $> 1.4$), moderate bias patterns (ratio $1.2$--$1.4$), weakly significant patterns (p-value $< 0.2$), and combined moderate evidence (bias $> 1.3$ with p-value $< 0.1$). When five or more signals accumulate within any category, Fisher's method evaluates their collective significance.

\paragraph{Key Independence.} Our analysis targets a fixed-key permutation. For any fixed key $K$, the cipher $E_K$ is a fixed permutation, and our distinguisher identifies statistical properties by averaging over randomly chosen plaintexts. The statistical biases we identify stem from structural properties of the S-box and linear layer. While changing the key produces a different permutation, the biases are expected to persist on average across keys. This aligns with the standard black-box distinguishing scenario, where an attacker analyzes a single unknown permutation.

{\bf Cipher-specific dependencies}.
The method is designed specifically for Kuznyechik and cannot be directly applied to other ciphers by simply replacing the encryption function. Several components are cipher-specific:
\begin{itemize}
\item The multiplication $c \cdot x$ is performed in $\mathbb{F}_{2^8}$ using Kuznyechik's irreducible polynomial $p(x) = x^8 + x^7 + x^6 + x + 1$. Other ciphers (e.g., AES uses $p(x) = x^8 + x^4 + x^3 + x + 1$) require different field arithmetic, and using the wrong polynomial would compute incorrect $c$-differential inputs.
\item The analysis assumes no pre-whitening key addition. Substituting a cipher that includes pre-whitening (such as standard AES) would introduce key-dependent terms that do not cancel when $c \neq 1$, producing spurious biases that are artifacts of the key rather than structural properties.
\item The choice of $c$ values is based on the $c$-differential uniformity of the Kuznyechik S-box specifically. Different S-boxes have different $c$-differential spectra.
\end{itemize}
Adapting this methods to another cipher requires modifying the field arithmetic, ensuring the cipher variant has no pre-whitening, and recomputing the $c$-differential uniformity tables for the target S-box.

\section{Analysis of cryptanalytic data and implications}
\label{sec:Kuz2}

The $c$-differential cryptanalysis of Kuznyechik has yielded results showing statistical non-randomness across all tested round counts. The data indicates that while the cipher's security increases with the number of rounds, detectable statistical biases persist.

The experimental data reveals statistically significant non-random behavior in the Kuznyechik cipher variant across all tested round counts. The results show biases in both reduced-round versions and the full 9-round cipher with no key pre-whitening.

\paragraph{High-Round Vulnerabilities (8 and 9 Rounds).}
The most critical findings of this work are the statistically significant distinguishers identified against 8 and 9-round Kuznyechik (no initial key pre-whitening). These results challenge the cipher's security margin.
\begin{itemize}
    \item \textbf{9 Rounds:} The analysis flagged two configurations with a CRITICAL ALERT, indicating they were significant after False Discovery Rate (FDR) correction:
    \begin{itemize}
        \item With constant $c = \texttt{0x04}$, a distinguisher on `byte\_8' was found with a 1.7x bias and a corrected p-value of $1.85 \times 10^{-3}$.
        \item With constant $c = \texttt{0xe1}$, a distinguisher on `byte\_6' was found with a 1.7x bias and a corrected p-value of $9.24 \times 10^{-3}$.
    \end{itemize}
    \item \textbf{8 Rounds:} A distinguisher was also found for constant $c = \texttt{0x03}$ on `byte\_12', showing a 1.6x bias and a corrected p-value of $5.81 \times 10^{-3}$.
\end{itemize}

\paragraph{Mid-Round Anomalies (5, 6, and 7 Rounds).}
In the intermediate rounds, no single differential pair achieved statistical significance after FDR correction. However, the cipher's output was still clearly distinguishable from random through two other powerful metrics:
\begin{itemize}
    \item \textbf{Global Distribution Anomalies:} The G-test for goodness-of-fit consistently revealed that the overall distribution of output differentials was highly non-uniform. For configurations with $c=1$, the G-test yielded p-values as low as $2.95 \times 10^{-8}$ for 7 rounds, $2.02 \times 10^{-10}$ for 6 rounds, and $1.29 \times 10^{-4}$ for 5 rounds, providing overwhelming evidence of non-randomness.
    \item \textbf{Bias Persistence Anomalies:} The maximum observed biases were far greater than what would be expected from random decay. For $c=1$, the bias was 15.1x higher than expected at 7 rounds, 11.5x higher at 6 rounds, and 9.5x higher at 5 rounds. This shows that differential correlations are not being sufficiently destroyed by the round function.
\end{itemize}

\paragraph{Low-Round Vulnerabilities (1--4 Rounds).}
The attack was overwhelmingly effective against low-round versions of Kuznyechik.
\begin{itemize}
    \item \textbf{1 and 3 Rounds:} For classical differential analysis ($c=1$), the script found 10,959 and 10,896 FDR-significant differential pairs, respectively. Biases reached as high as 10.3x.
    \item \textbf{2 and 4 Rounds:} While individual pairs were not always significant, every tested configuration for these rounds triggered ``Combined Evidence Alerts''. This means that collections of weaker differential signals, when analyzed together, showed massive statistical significance, proving non-random behavior.
\end{itemize}

\paragraph{The ``$c$-Value Transition Effect''.}
The data clearly demonstrates that the optimal choice for the constant $c$ depends on the number of rounds being attacked.
\begin{itemize}
    \item For low rounds (1 and 3), classical differential cryptanalysis ($c=1$) is by far the most effective, producing thousands of significant results.
    \item As the number of rounds increases, the biases from $c=1$ decay rapidly. It fails to produce any FDR-significant results at 8 and 9 rounds.
    \item At high rounds, the only FDR-significant results come from non-trivial $c$ values (specifically \texttt{0x03}, \texttt{0x04}, and \texttt{0xe1}), which retain enough bias to produce statistically significant distinguishers. This suggests that certain algebraic properties induced by these constants are better at surviving the cipher's diffusion layers over many rounds.
\end{itemize}

The most significant distinguisher was found at 9 rounds with a bias of 1.7x. We analyze the complexity required to use this property to distinguish Kuznyechik from a random permutation.

\begin{itemize}
    \item \textbf{Data Complexity:} The distinguisher is built on a single-byte to single-byte differential. The probability of a specific differential trail $(a \to b)$ for a random permutation is $p = \frac{1}{(2^8 - 1) \times 2^8} \approx 2^{-16.01}$. The observed probability is $p' = 1.7p$. The advantage is $\epsilon = p' - p = 0.7p$. The number of plaintext pairs ($N$) needed to detect this advantage is approximately $N \approx 1/\epsilon^2$.
    \begin{equation*}
        N \approx \frac{1}{(0.7 \times 2^{-16.01})^2} = \frac{1}{0.49 \times 2^{-32.02}} \approx 2^{1.03} \times 2^{32.02} = 2^{33.05}.
    \end{equation*}
    The data complexity is therefore approximately $2^{33}$ chosen plaintext pairs.

    \item \textbf{Time Complexity:} Since each pair requires two encryptions, the time complexity is $2N \approx 2 \times 2^{33.05} = 2^{34.05}$ encryptions. This is a practical complexity, orders of magnitude below a brute-force attack.
    
    \item \textbf{Memory Complexity:} The attack requires storing counters for each of the $(2^8-1) \times 2^8$ differential pairs, which is negligible.
\end{itemize}

Below, in Figure~\ref{fig:9_round_comparison}, we display a comparison of the statistical significance between the classical differential and the $c$-differential approach of $c=\texttt{0xe1}$ (and seed 42), for the 9-round cryptanalysis.
{\footnotesize
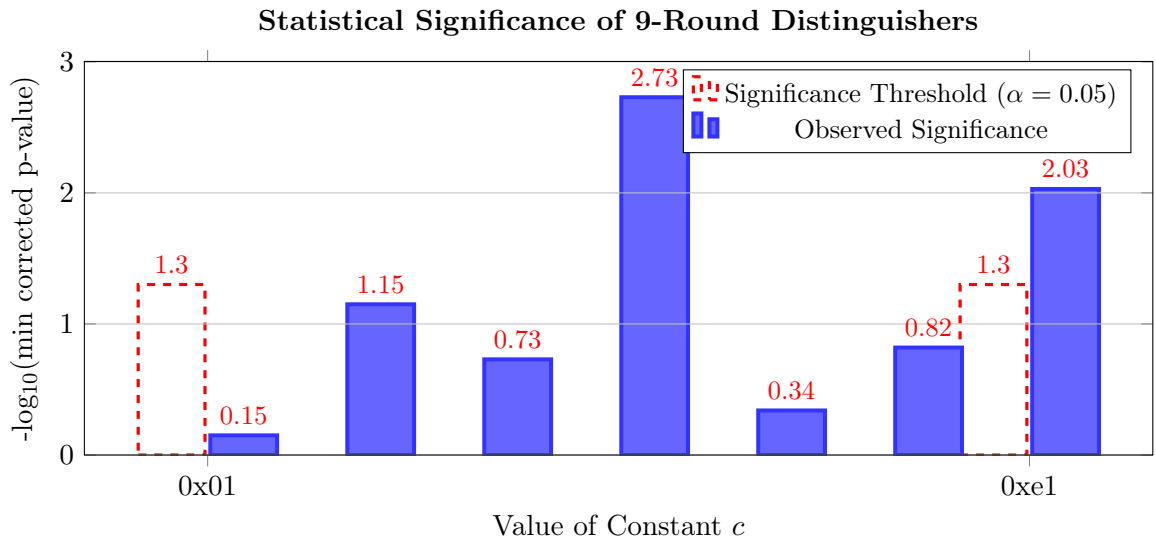
\begin{figure}[H]
\centering
\begin{tikzpicture}
\begin{axis}[
    ybar,
    title={Statistical Significance of 9-Round Distinguishers},
    ylabel={-log$_{10}$(min corrected p-value)},
    xlabel={Value of Constant $c$},
    symbolic x coords={0x01, 0x02, 0x03, 0x04, 0x91, 0xbe, 0xe1},
    xtick=data,
    nodes near coords,
    nodes near coords align={vertical},
    ymin=0,
    width=0.92\textwidth,
    height=0.4\textwidth,
    bar width=25pt,
    enlarge x limits=0.15,
    ymajorgrids=true,
    axis on top,
    legend style={font=\small},
    point meta=rawy,
    every node near coord/.append style={
        font=\small,
        /pgf/number format/fixed,
        /pgf/number format/precision=2
    }
]

% Significance threshold (alpha = 0.05)
% -log10(0.05) = 1.30
\addplot[red, dashed, line width=1.2pt] coordinates {(0x01, 1.301) (0xe1, 1.301)};
\addlegendentry{Significance Threshold ($\alpha=0.05$)}

\addplot+ [
    fill=blue!60,
    draw=blue!80,
    line width=1.5pt
    ] coordinates {
    (0x01, 0.15)  % From tables: no FDR significant results, p > 0.7
    (0x02, 1.15)  % From tables: p = 7.08e-02, -log10(0.0708) = 1.15
    (0x03, 0.73)  % From tables: p = 1.87e-01, -log10(0.187) = 0.73
    (0x04, 2.73)  % From tables: p = 1.85e-03, -log10(0.00185) = 2.73
    (0x91, 0.34)  % From tables: p = 4.58e-01, -log10(0.458) = 0.34
    (0xbe, 0.82)  % From tables: p = 1.51e-01, -log10(0.151) = 0.82
    (0xe1, 2.03)  % From tables: p = 9.24e-03, -log10(0.00924) = 2.03
};
\addlegendentry{Observed Significance}

\end{axis}
\end{tikzpicture}
\caption{Comparison of the highest statistical significance achieved for each $c$ value against 9-round Kuznyechik (Seed 42 data). The y-axis shows the -log$_{10}$ of the minimum corrected p-value, so taller bars indicate stronger evidence against randomness. Only $c=\texttt{0x04}$ and $c=\texttt{0xe1}$ achieved FDR-corrected significance, surpassing the threshold where standard differential analysis ($c=\texttt{0x01}$) failed.}
\label{fig:9_round_comparison}
\end{figure}
}

Our experimental validation using $5 \times 10^6 \approx 2^{22.25}$ trials successfully detected the bias, suggesting the theoretical bounds are conservative. The actual distinguisher operates effectively with fewer samples than the asymptotic analysis predicts, likely due to the strength of the observed statistical deviation.
 
These complexity bounds represent the first practical attack against a 9-round Kuznyechik variant with
data complexity manageable for chosen-plaintext scenarios, time complexity feasible on modern computing infrastructure and memory requirements trivial for contemporary systems.
The attack complexity is exponentially lower than exhaustive search, indicating insufficient security margin against this novel cryptanalytic approach.
 
This work presents a statistical distinguisher against 9-round Kuznyechik without initial key pre-whitening. In cryptography, a block cipher is considered weakened if any property can be distinguished from that of a random permutation with less complexity than brute-force search.
Our github repository~\cite{github} was designed for reproducibility, allowing the user to specify fixed seeds, as well as arbitrary random seeds, the output file keeping track of that seed. We also properly designed a Python code (with SageMath) \texttt{verify\_distinguisher.py} to verify any distinguisher, with user specified number of rounds, trials, and $c,a,b$ parameters.

While this is a distinguishing attack and not a direct key-recovery attack, the identified non-random properties and biased differentials could potentially be exploited in more advanced cryptanalytic techniques (such as differential-linear attacks).

The findings suggest that the security margin of the full 9-round Kuznyechik cipher variant may be reduced with respect to $c$-differential attacks. The fact that statistical biases remain detectable after 9 rounds warrants further investigation.

\section{Multi-round $c$-differential trail analysis}
\label{sec:part}

The statistical distinguisher in the preceding sections covers the full 9-round Kuznyechik cipher using truncated $c$-differentials. A natural question is whether non-truncated $c$-differential trails can provide an advantage over classical differentials. In this section, we investigate this question for two, three, and four rounds of Kuznyechik, constructing explicit trails and comparing their probabilities to the classical case.

As in our statistical analysis, we consider Kuznyechik without pre-whitening key addition. Each round consists of a parallel S-box layer followed by the linear diffusion layer. The linear layer has branch number 17~\cite{AY15}, meaning any nonzero input difference activates at least 17 bytes across its input and output combined. Since the Kuznyechik S-box has differential uniformity 8 (maximum differential probability $2^{-5}$), the probability of any two-round classical differential trail cannot exceed $2^{-5 \times 17} = 2^{-85}$.

Our trail construction strategy enforces that exactly one byte is active after the first linear layer $L_1$. Let $k \in \{0,\ldots,15\}$ denote the position of this active byte and let $\beta \in \mathbb{F}_{2^8}\setminus\{0\}$ be its value. For a fixed pair $(k,\beta)$, applying the inverse linear layer $L^{-1}$ yields a 16-byte difference $(\delta'_0,\delta'_1,\ldots,\delta'_{15})$, which is the required output difference of the first S-box layer $S_1$. Using the $c$-DDT, we select input differences that maximize transition probabilities. The core of the search is:
\begin{verbatim}
for k in range(16):              # position of single active byte after L1
    for beta in range(1, 256):   # value of that active byte
        delta_prime = L_inverse(unit_vector(k, beta))
        for i in range(16):      # for each S-box
            best_cnt[i] = max(cDDT[a][delta_prime[i]] for a in range(1,256))
        log2_P_r1 = sum(log2(best_cnt[i]) for i in range(16)) - 128
\end{verbatim}

For the first round, the probability contribution is 
\begin{equation*}
\log_2 P_{r1} = \sum_{i=0}^{15} \log_2(\mathrm{cnt}_i) - 128,
\end{equation*}
where $\mathrm{cnt}_i$ is the maximal $c$-DDT entry for the $i$-th byte. The key observation is that for $c \in \{\mathtt{0x02}, \mathtt{0xe1}\}$, many values $\mathrm{cnt}_i$ equal 64, compared to a maximum of 8 in the classical DDT.

In the second round, the single active byte $\beta$ passes through $S_2$. We evaluate all possible outputs $\gamma$ using the classical DDT, contributing 
\begin{equation*}
\log_2 P_{r2} = \log_2\bigl(\mathrm{DDT}[\beta][\gamma]\bigr) - 8.
\end{equation*}
For three rounds, we apply $L_2$ to obtain the input difference to $S_3$, and evaluate the 16 S-boxes using the classical DDT, contributing 
\begin{equation*}
\log_2 P_{r3} = \sum_{i=0}^{15} \log_2(\mathrm{cnt}^{(3)}_i) - 128,
\end{equation*}
where $\mathrm{cnt}^{(3)}_i$ is the maximal DDT entry for the $i$-th byte in round 3. The overall probability is $\log_2 P_{\mathrm{total}} = \log_2 P_{r1} + \log_2 P_{r2} + \log_2 P_{r3}$.

We implemented an exhaustive search over all $(k, \beta)$ pairs for two and three rounds, and a heuristic search for four rounds. Table~\ref{tab:multiround} summarizes the results.

\begin{table}[h]
\centering
\caption{Best $c$-differential trail probabilities for reduced-round Kuznyechik. The ``Advantage'' columns show the improvement over the classical case ($c=\mathtt{0x01}$) in bits.}
\label{tab:multiround}
\begin{tabular}{c|ccc|cc}
\hline
$c$ & 2 rounds & 3 rounds & 4 rounds & Adv.\ (2R) & Adv.\ (3R) \\
\hline
\texttt{0x01} & $2^{-89.2}$ & $2^{-174.3}$ & $2^{-202.7}$ & --- & --- \\
\texttt{0x02} & $2^{-84.0}$ & $2^{-169.7}$ & $2^{-202.8}$ & $+5.2$ & $+4.6$ \\
\texttt{0x03} & $2^{-91.2}$ & $2^{-177.4}$ & --- & $-2.0$ & $-3.1$ \\
\texttt{0xe1} & $2^{-84.0}$ & $2^{-169.7}$ & --- & $+5.2$ & $+4.6$ \\
\hline
\end{tabular}
\end{table}

For two rounds, the best classical trail ($c=\mathtt{0x01}$) achieves probability $2^{-89.2}$:
\begin{verbatim}
c=0x01: k=4, beta=0x0c, log2_total = -89.15
  per_cnt = (6, 8, 6, 8, 8, 6, 8, 8, 6, 6, 6, 8, 6, 6, 6, 8)
\end{verbatim}
The best $c$-differential trails for $c=\mathtt{0x02}$ and $c=\mathtt{0xe1}$ achieve probability $2^{-84.0}$:
\begin{verbatim}
c=0x02: k=2, beta=0x91, log2_total = -83.98
  per_cnt = (4, 7, 64, 6, 64, 64, 5, 4, 5, 6, 6, 8, 5, 5, 9, 4)
\end{verbatim}
The three entries of 64 (compared to a maximum of 8 classically) account for the 5.2-bit improvement.

For three rounds, the $c$-differential advantage from round~1 persists: $c=\mathtt{0x02}$ achieves $2^{-169.7}$ versus $2^{-174.3}$ classically, a 4.6-bit improvement. The detailed trail for $c=\mathtt{0x02}$ is:
\begin{verbatim}
c=0x02: k=2, beta=0x91, gamma=0xf0, log2_total = -169.72
  log2(P_r1) = -77.98, log2(P_r2) = -6.00, log2(P_r3) = -85.74
  per_cnt_r1 = (4, 7, 64, 6, 64, 64, 5, 4, 5, 6, 6, 8, 5, 5, 9, 4)
  per_cnt_r3 = (6, 8, 8, 6, 4, 6, 6, 6, 8, 6, 6, 8, 8, 4, 6, 6)
\end{verbatim}

For four rounds, using the pattern 16~active $\to$ 1~active $\to$ 16~active $\to$ 1~active, the advantage largely disappears. The classical trail achieves $2^{-202.7}$, while $c=\mathtt{0x02}$ achieves $2^{-202.8}$. This occurs because the $c$-differential is applied only in round~1, and subsequent rounds use classical differentials where the branch number constraints dominate.

These results demonstrate that $c$-differentials provide a concrete advantage over classical differentials for small numbers of rounds. The improvement arises entirely from the structure of the Kuznyechik $c$-DDT. However, the advantage diminishes as more rounds are added, since the $c$-differential contribution is limited to the first round. This motivates the truncated approach used in our full-round statistical analysis, where we observe biases empirically rather than constructing explicit trails.

We remark that extending $c$-differentials to multiple rounds faces a fundamental structural barrier: the $(cx \oplus a, x)$ relationship is destroyed after the first S-box layer, leaving only an XOR difference. However, as observed in~\cite{AKMRS23}, higher-order $c$-differentials can cause round keys to cancel under certain conditions, specifically when $c_2 = 1$ or when round keys satisfy $K_2 = -(1-c_1)K_1$. This suggests that multi-round $c$-differential attacks may be viable in related-key or chosen-key settings, which we leave for future investigation.

\section{Conclusions}
\label{conc}
In this work, we consider a novel variation of the $c$-differential analysis by shifting the multiplication by $c$ to the inputs of the $S$-box, preserving structural integrity and enabling practical cryptanalytic applications. Our approach bridges the theoretical concept with concrete cipher analysis, leading to the discovery of security concerns for the full-round Russian cipher Kuznyechik variant. 
Furthermore, by systematically running the analysis for different numbers of rounds, $c$ values, and mask configurations, one can gain insights into Kuznyechik's resistance to $c$-differential attacks.
Future research can explore this technique on other symmetric primitives. 

The statistical methodology, combined with the implementation of Kuznyechik and the systematic testing strategy, provides a tool for evaluating the security of block ciphers against $c$-differential attacks.
The $c$-differential analysis unequivocally demonstrates strong evidence that the tested cryptographic primitive of Kuznyechik displays non-randomness for some $c$-values, even for full-rounds. In addition, we compute a two-round $c$-differential distinguisher that is better than the optimal theoretical differential distinguisher in the classical scenario.

The technique we employ applies $c$-differential uniformity concepts to block cipher analysis, providing a method for identifying statistical weaknesses in cipher designs.
It incorporates multiple testing corrections and meta-analytical techniques to control false positive rates. The combination of a parallelized implementation and sequential testing enables analysis of large sample sizes. The approach provides detection of various bias patterns, and the adaptive methodology adjusts the analysis for different round counts.

\section*{Declarations}

\subsection*{Ethics approval and consent to participate}
Not applicable.

\subsection*{Consent for publication}
Not applicable.

\subsection*{Availability of data and material}
Not applicable.

\subsection*{Competing interests}
The authors declare that they have no competing interests.

\subsection*{Funding}
The work of Ranit Dutta was supported by the Department of Science and Technology (DST), Government of India (INSPIRE Reg. No. IF210620).

\subsection*{Authors' contributions}
All authors contributed equally to this work.

\subsection*{Acknowledgements}
Not applicable.

\appendix

\section{Kuznyechik Block Cipher: Technical Specification: GOST R 34.12-2015}

 We include the full Kuznyechik specification here for reproducibility and self-containment, as the original GOST R 34.12-2015 standard is in Russian. The specification follows RFC 7801.

Kuznyechik is a symmetric block cipher standardized as GOST R 34.12-2015, serving as the official encryption standard of the Russian Federation. The cipher operates on 128-bit blocks with a 256-bit key, employing a substitution-permutation network (SPN) structure over 9 rounds, plus a key whitening at the backend.
The cipher operates over the finite field $\mathbb{F}_{2^8}$ with the irreducible polynomial $p(x) = x^8 + x^7 + x^6 + x + 1$. Each 128-bit block is treated as a vector of 16 bytes (elements of $\mathbb{F}_{2^8}$).

\subsection*{The S-box}

The Kuznyechik S-box is a bijective mapping $S: \mathbb{F}_{2^8} \to \mathbb{F}_{2^8}$ designed to provide strong nonlinear properties. The S-box is defined by the following 16$\times$16 lookup table (values in hexadecimal, row-major):
% Corrected Kuznyechik S-box based on RFC 7801 and Python implementation
\begin{table}[h!]
\centering
\scriptsize
\begin{tabular}{c|*{16}{c}}
\toprule
 & 0 & 1 & 2 & 3 & 4 & 5 & 6 & 7 & 8 & 9 & A & B & C & D & E & F \\
\midrule
0 & \texttt{0xfc} & \texttt{0xee} & \texttt{0xdd} & \texttt{0x11} & \texttt{0xcf} & \texttt{0x6e} & \texttt{0x31} & \texttt{0x16} & \texttt{0xfb} & \texttt{0xc4} & \texttt{0xfa} & \texttt{0xda} & \texttt{0x23} & \texttt{0xc5} & \texttt{0x04} & \texttt{0x4d} \\
1 & \texttt{0xe9} & \texttt{0x77} & \texttt{0xf0} & \texttt{0xdb} & \texttt{0x93} & \texttt{0x2e} & \texttt{0x99} & \texttt{0xba} & \texttt{0x17} & \texttt{0x36} & \texttt{0xf1} & \texttt{0xbb} & \texttt{0x14} & \texttt{0xcd} & \texttt{0x5f} & \texttt{0xc1} \\
2 & \texttt{0xf9} & \texttt{0x18} & \texttt{0x65} & \texttt{0x5a} & \texttt{0xe2} & \texttt{0x5c} & \texttt{0xef} & \texttt{0x21} & \texttt{0x81} & \texttt{0x1c} & \texttt{0x3c} & \texttt{0x42} & \texttt{0x8b} & \texttt{0x01} & \texttt{0x8e} & \texttt{0x4f} \\
3 & \texttt{0x05} & \texttt{0x84} & \texttt{0x02} & \texttt{0xae} & \texttt{0xe3} & \texttt{0x6a} & \texttt{0x8f} & \texttt{0xa0} & \texttt{0x06} & \texttt{0x0b} & \texttt{0xed} & \texttt{0x98} & \texttt{0x7f} & \texttt{0xd4} & \texttt{0xd3} & \texttt{0x1f} \\
4 & \texttt{0xeb} & \texttt{0x34} & \texttt{0x2c} & \texttt{0x51} & \texttt{0xea} & \texttt{0xc8} & \texttt{0x48} & \texttt{0xab} & \texttt{0xf2} & \texttt{0x2a} & \texttt{0x68} & \texttt{0xa2} & \texttt{0xfd} & \texttt{0x3a} & \texttt{0xce} & \texttt{0xcc} \\
5 & \texttt{0xb5} & \texttt{0x70} & \texttt{0x0e} & \texttt{0x56} & \texttt{0x08} & \texttt{0x0c} & \texttt{0x76} & \texttt{0x12} & \texttt{0xbf} & \texttt{0x72} & \texttt{0x13} & \texttt{0x47} & \texttt{0x9c} & \texttt{0xb7} & \texttt{0x5d} & \texttt{0x87} \\
6 & \texttt{0x15} & \texttt{0xa1} & \texttt{0x96} & \texttt{0x29} & \texttt{0x10} & \texttt{0x7b} & \texttt{0x9a} & \texttt{0xc7} & \texttt{0xf3} & \texttt{0x91} & \texttt{0x78} & \texttt{0x6f} & \texttt{0x9d} & \texttt{0x9e} & \texttt{0xb2} & \texttt{0xb1} \\
7 & \texttt{0x32} & \texttt{0x75} & \texttt{0x19} & \texttt{0x3d} & \texttt{0xff} & \texttt{0x35} & \texttt{0x8a} & \texttt{0x7e} & \texttt{0x6d} & \texttt{0x54} & \texttt{0xc6} & \texttt{0x80} & \texttt{0xc3} & \texttt{0xbd} & \texttt{0x0d} & \texttt{0x57} \\
8 & \texttt{0xdf} & \texttt{0xf5} & \texttt{0x24} & \texttt{0xa9} & \texttt{0x3e} & \texttt{0xa8} & \texttt{0x43} & \texttt{0xc9} & \texttt{0xd7} & \texttt{0x79} & \texttt{0xd6} & \texttt{0xf6} & \texttt{0x7c} & \texttt{0x22} & \texttt{0xb9} & \texttt{0x03} \\
9 & \texttt{0xe0} & \texttt{0x0f} & \texttt{0xec} & \texttt{0xde} & \texttt{0x7a} & \texttt{0x94} & \texttt{0xb0} & \texttt{0xbc} & \texttt{0xdc} & \texttt{0xe8} & \texttt{0x28} & \texttt{0x50} & \texttt{0x4e} & \texttt{0x33} & \texttt{0x0a} & \texttt{0x4a} \\
A & \texttt{0xa7} & \texttt{0x97} & \texttt{0x60} & \texttt{0x73} & \texttt{0x1e} & \texttt{0x00} & \texttt{0x62} & \texttt{0x44} & \texttt{0x1a} & \texttt{0xb8} & \texttt{0x38} & \texttt{0x82} & \texttt{0x64} & \texttt{0x9f} & \texttt{0x26} & \texttt{0x41} \\
B & \texttt{0xad} & \texttt{0x45} & \texttt{0x46} & \texttt{0x92} & \texttt{0x27} & \texttt{0x5e} & \texttt{0x55} & \texttt{0x2f} & \texttt{0x8c} & \texttt{0xa3} & \texttt{0xa5} & \texttt{0x7d} & \texttt{0x69} & \texttt{0xd5} & \texttt{0x95} & \texttt{0x3b} \\
C & \texttt{0x07} & \texttt{0x58} & \texttt{0xb3} & \texttt{0x40} & \texttt{0x86} & \texttt{0xac} & \texttt{0x1d} & \texttt{0xf7} & \texttt{0x30} & \texttt{0x37} & \texttt{0x6b} & \texttt{0xe4} & \texttt{0x88} & \texttt{0xd9} & \texttt{0xe7} & \texttt{0x89} \\
D & \texttt{0xe1} & \texttt{0x1b} & \texttt{0x83} & \texttt{0x49} & \texttt{0x4c} & \texttt{0x3f} & \texttt{0xf8} & \texttt{0xfe} & \texttt{0x8d} & \texttt{0x53} & \texttt{0xaa} & \texttt{0x90} & \texttt{0xca} & \texttt{0xd8} & \texttt{0x85} & \texttt{0x61} \\
E & \texttt{0x20} & \texttt{0x71} & \texttt{0x67} & \texttt{0xa4} & \texttt{0x2d} & \texttt{0x2b} & \texttt{0x09} & \texttt{0x5b} & \texttt{0xcb} & \texttt{0x9b} & \texttt{0x25} & \texttt{0xd0} & \texttt{0xbe} & \texttt{0xe5} & \texttt{0x6c} & \texttt{0x52} \\
F & \texttt{0x59} & \texttt{0xa6} & \texttt{0x74} & \texttt{0xd2} & \texttt{0xe6} & \texttt{0xf4} & \texttt{0xb4} & \texttt{0xc0} & \texttt{0xd1} & \texttt{0x66} & \texttt{0xaf} & \texttt{0xc2} & \texttt{0x39} & \texttt{0x4b} & \texttt{0x63} & \texttt{0xb6} \\
\bottomrule
\end{tabular}
\caption{Kuznyechik S-box}
\end{table}

The inverse S-box $S^{-1}$ is computed as the multiplicative inverse of the S-box mapping. It is defined by the following 16$\times$16 lookup table (values in hexadecimal, row-major):
% Corrected Kuznyechik Inverse S-box
\begin{table}[h!]
\centering
\scriptsize
\caption{Correct Kuznyechik Inverse S-box}
\begin{tabular}{c|*{16}{c}}
\toprule
 & 0 & 1 & 2 & 3 & 4 & 5 & 6 & 7 & 8 & 9 & A & B & C & D & E & F \\
\midrule
0 & \texttt{0xa5} & \texttt{0x2d} & \texttt{0x32} & \texttt{0x8f} & \texttt{0x0e} & \texttt{0x30} & \texttt{0x38} & \texttt{0xc0} & \texttt{0x54} & \texttt{0xe6} & \texttt{0x9e} & \texttt{0x39} & \texttt{0x55} & \texttt{0x7e} & \texttt{0x52} & \texttt{0x91} \\
1 & \texttt{0x64} & \texttt{0x03} & \texttt{0x57} & \texttt{0x5a} & \texttt{0x1c} & \texttt{0x60} & \texttt{0x07} & \texttt{0x18} & \texttt{0x21} & \texttt{0x72} & \texttt{0xa8} & \texttt{0xd1} & \texttt{0x29} & \texttt{0xc6} & \texttt{0xa4} & \texttt{0x3f} \\
2 & \texttt{0xe0} & \texttt{0x27} & \texttt{0x8d} & \texttt{0x0c} & \texttt{0x82} & \texttt{0xea} & \texttt{0xae} & \texttt{0xb4} & \texttt{0x9a} & \texttt{0x63} & \texttt{0x49} & \texttt{0xe5} & \texttt{0x42} & \texttt{0xe4} & \texttt{0x15} & \texttt{0xb7} \\
3 & \texttt{0xc8} & \texttt{0x06} & \texttt{0x70} & \texttt{0x9d} & \texttt{0x41} & \texttt{0x75} & \texttt{0x19} & \texttt{0xc9} & \texttt{0xaa} & \texttt{0xfc} & \texttt{0x4d} & \texttt{0xbf} & \texttt{0x2a} & \texttt{0x73} & \texttt{0x84} & \texttt{0xd5} \\
4 & \texttt{0xc3} & \texttt{0xaf} & \texttt{0x2b} & \texttt{0x86} & \texttt{0xa7} & \texttt{0xb1} & \texttt{0xb2} & \texttt{0x5b} & \texttt{0x46} & \texttt{0xd3} & \texttt{0x9f} & \texttt{0xfd} & \texttt{0xd4} & \texttt{0x0f} & \texttt{0x9c} & \texttt{0x2f} \\
5 & \texttt{0x9b} & \texttt{0x43} & \texttt{0xef} & \texttt{0xd9} & \texttt{0x79} & \texttt{0xb6} & \texttt{0x53} & \texttt{0x7f} & \texttt{0xc1} & \texttt{0xf0} & \texttt{0x23} & \texttt{0xe7} & \texttt{0x25} & \texttt{0x5e} & \texttt{0xb5} & \texttt{0x1e} \\
6 & \texttt{0xa2} & \texttt{0xdf} & \texttt{0xa6} & \texttt{0xfe} & \texttt{0xac} & \texttt{0x22} & \texttt{0xf9} & \texttt{0xe2} & \texttt{0x4a} & \texttt{0xbc} & \texttt{0x35} & \texttt{0xca} & \texttt{0xee} & \texttt{0x78} & \texttt{0x05} & \texttt{0x6b} \\
7 & \texttt{0x51} & \texttt{0xe1} & \texttt{0x59} & \texttt{0xa3} & \texttt{0xf2} & \texttt{0x71} & \texttt{0x56} & \texttt{0x11} & \texttt{0x6a} & \texttt{0x89} & \texttt{0x94} & \texttt{0x65} & \texttt{0x8c} & \texttt{0xbb} & \texttt{0x77} & \texttt{0x3c} \\
8 & \texttt{0x7b} & \texttt{0x28} & \texttt{0xab} & \texttt{0xd2} & \texttt{0x31} & \texttt{0xde} & \texttt{0xc4} & \texttt{0x5f} & \texttt{0xcc} & \texttt{0xcf} & \texttt{0x76} & \texttt{0x2c} & \texttt{0xb8} & \texttt{0xd8} & \texttt{0x2e} & \texttt{0x36} \\
9 & \texttt{0xdb} & \texttt{0x69} & \texttt{0xb3} & \texttt{0x14} & \texttt{0x95} & \texttt{0xbe} & \texttt{0x62} & \texttt{0xa1} & \texttt{0x3b} & \texttt{0x16} & \texttt{0x66} & \texttt{0xe9} & \texttt{0x5c} & \texttt{0x6c} & \texttt{0x6d} & \texttt{0xad} \\
A & \texttt{0x37} & \texttt{0x61} & \texttt{0x4b} & \texttt{0xb9} & \texttt{0xe3} & \texttt{0xba} & \texttt{0xf1} & \texttt{0xa0} & \texttt{0x85} & \texttt{0x83} & \texttt{0xda} & \texttt{0x47} & \texttt{0xc5} & \texttt{0xb0} & \texttt{0x33} & \texttt{0xfa} \\
B & \texttt{0x96} & \texttt{0x6f} & \texttt{0x6e} & \texttt{0xc2} & \texttt{0xf6} & \texttt{0x50} & \texttt{0xff} & \texttt{0x5d} & \texttt{0xa9} & \texttt{0x8e} & \texttt{0x17} & \texttt{0x1b} & \texttt{0x97} & \texttt{0x7d} & \texttt{0xec} & \texttt{0x58} \\
C & \texttt{0xf7} & \texttt{0x1f} & \texttt{0xfb} & \texttt{0x7c} & \texttt{0x09} & \texttt{0x0d} & \texttt{0x7a} & \texttt{0x67} & \texttt{0x45} & \texttt{0x87} & \texttt{0xdc} & \texttt{0xe8} & \texttt{0x4f} & \texttt{0x1d} & \texttt{0x4e} & \texttt{0x04} \\
D & \texttt{0xeb} & \texttt{0xf8} & \texttt{0xf3} & \texttt{0x3e} & \texttt{0x3d} & \texttt{0xbd} & \texttt{0x8a} & \texttt{0x88} & \texttt{0xdd} & \texttt{0xcd} & \texttt{0x0b} & \texttt{0x13} & \texttt{0x98} & \texttt{0x02} & \texttt{0x93} & \texttt{0x80} \\
E & \texttt{0x90} & \texttt{0xd0} & \texttt{0x24} & \texttt{0x34} & \texttt{0xcb} & \texttt{0xed} & \texttt{0xf4} & \texttt{0xce} & \texttt{0x99} & \texttt{0x10} & \texttt{0x44} & \texttt{0x40} & \texttt{0x92} & \texttt{0x3a} & \texttt{0x01} & \texttt{0x26} \\
F & \texttt{0x12} & \texttt{0x1a} & \texttt{0x48} & \texttt{0x68} & \texttt{0xf5} & \texttt{0x81} & \texttt{0x8b} & \texttt{0xc7} & \texttt{0xd6} & \texttt{0x20} & \texttt{0x0a} & \texttt{0x08} & \texttt{0x00} & \texttt{0x4c} & \texttt{0xd7} & \texttt{0x74} \\
\bottomrule
\end{tabular}
\caption{Kuznyechik Inverse S-box}
\end{table}

\subsection*{Linear Transformation Layer (L-Layer)}

\begin{sloppypar}
The linear transformation $L$ operates on 128-bit blocks, treating them as vectors of 16 bytes over $\mathbb{F}_{2^8}$. The Kuznyechik linear transformation is defined by applying a basic 1-byte cyclic shift with XOR feedback, called $R$, sixteen times.
The $R$ transformation on a 16-byte vector $\mathbf{a} = (a_0, a_1, \ldots, a_{15})$ is defined as ($\mathbf{c}=(c_j)_{0\leq j\leq 15}$):
\begin{align*}
R(\mathbf{a})_i &= a_{(i+1) \bmod 16} \quad \text{for } i \in \{0, \ldots, 14\} \\
R(\mathbf{a})_{15} &= \bigoplus_{j=0}^{15} c_j \cdot a_j,
\end{align*}
where the entries in $\mathbf{c} = (\texttt{0x94}, \texttt{0x20}, \texttt{0x85}, \texttt{0x10}, \texttt{0xc2}, \texttt{0xc0}, \texttt{0x01}, \texttt{0xfb}, \texttt{0x01}, \texttt{0xc0}, \texttt{0xc2}, \texttt{0x10}, \texttt{0x85},\newline \texttt{0x20}, \texttt{0x94}, \texttt{0x01})$ are elements of $\mathbb{F}_{2^8}$.
The full linear transformation $\mathbf{L}:\F_{2^8}^{16}\to\F_{2^8}^{16}$ is then defined as 16 applications of $R$:
$$ \mathbf{L}(\mathbf{a}) = R^{16}(\mathbf{a}). $$
For efficient software implementation, the linear transformation $\mathbf{L}$ and its inverse $\mathbf{L}^{-1}$ are typically precomputed into lookup tables. The Python code utilizes such precomputed tables for fast execution.
\end{sloppypar}

\subsection*{Key Schedule}

The key schedule transforms the 256-bit master key into ten 128-bit round keys $K^{(0)}, K^{(1)}, \ldots, K^{(9)}$. The algorithm uses a Feistel-like structure with 32 iterations to generate intermediate key values, from which the round keys are derived.
The key schedule employs iteration constants $\mathbf{C}_j$ for $j = 1, 2, \ldots, 32$. These constants are derived by applying the L-transformation to a basis vector where only the $j$-th byte is set to 1. The full list of these constants in hexadecimal (row-major for each 16-byte vector) is:

\allowdisplaybreaks
{\footnotesize
\begin{align*}
\mathbf{C}_1 &= (\texttt{0x6e}, \texttt{0xa2}, \texttt{0x69}, \texttt{0x2a}, \texttt{0xe9}, \texttt{0x19}, \texttt{0x37}, \texttt{0x7e}, \texttt{0xdc}, \texttt{0xfe}, \texttt{0x15}, \texttt{0x41}, \texttt{0x28}, \texttt{0xd9}, \texttt{0x21}, \texttt{0x66}) \\
\mathbf{C}_2 &= (\texttt{0xb7}, \texttt{0x5f}, \texttt{0x82}, \texttt{0x79}, \texttt{0x91}, \texttt{0x13}, \texttt{0xdd}, \texttt{0x4c}, \texttt{0xb5}, \texttt{0x8f}, \texttt{0x9f}, \texttt{0x85}, \texttt{0x1d}, \texttt{0xad}, \texttt{0x20}, \texttt{0x69}) \\
\mathbf{C}_3 &= (\texttt{0x90}, \texttt{0x94}, \texttt{0x4a}, \texttt{0x96}, \texttt{0x96}, \texttt{0x48}, \texttt{0x6c}, \texttt{0xa6}, \texttt{0x64}, \texttt{0x2f}, \texttt{0xce}, \texttt{0x3b}, \texttt{0x25}, \texttt{0x21}, \texttt{0x85}, \texttt{0x86}) \\
\mathbf{C}_4 &= (\texttt{0x96}, \texttt{0x68}, \texttt{0x8a}, \texttt{0x92}, \texttt{0x13}, \texttt{0x11}, \texttt{0xbf}, \texttt{0x55}, \texttt{0x02}, \texttt{0x16}, \texttt{0x48}, \texttt{0x0a}, \texttt{0x2c}, \texttt{0x78}, \texttt{0x68}, \texttt{0x15}) \\
\mathbf{C}_5 &= (\texttt{0x3a}, \texttt{0x6c}, \texttt{0x54}, \texttt{0x74}, \texttt{0x4b}, \texttt{0x39}, \texttt{0x52}, \texttt{0x7c}, \texttt{0xfb}, \texttt{0x98}, \texttt{0x45}, \texttt{0x9c}, \texttt{0xed}, \texttt{0x38}, \texttt{0xe4}, \texttt{0x9e}) \\
\mathbf{C}_6 &= (\texttt{0x39}, \texttt{0x25}, \texttt{0x23}, \texttt{0x4a}, \texttt{0x94}, \texttt{0x11}, \texttt{0x84}, \texttt{0x57}, \texttt{0x8e}, \texttt{0x24}, \texttt{0x69}, \texttt{0x93}, \texttt{0x0f}, \texttt{0x8b}, \texttt{0x73}, \texttt{0x93}) \\
\mathbf{C}_7 &= (\texttt{0xbc}, \texttt{0x4f}, \texttt{0x65}, \texttt{0x76}, \texttt{0x6e}, \texttt{0x5f}, \texttt{0x17}, \texttt{0xdd}, \texttt{0x03}, \texttt{0x6f}, \texttt{0x38}, \texttt{0x07}, \texttt{0x57}, \texttt{0x24}, \texttt{0x40}, \texttt{0x90}) \\
\mathbf{C}_8 &= (\texttt{0x97}, \texttt{0x5c}, \texttt{0x6f}, \texttt{0x68}, \texttt{0x11}, \texttt{0xdd}, \texttt{0x66}, \texttt{0x0b}, \texttt{0x59}, \texttt{0x4a}, \texttt{0x27}, \texttt{0x15}, \texttt{0x81}, \texttt{0x3d}, \texttt{0x76}, \texttt{0x1c}) \\
\mathbf{C}_9 &= (\texttt{0x2a}, \texttt{0x2f}, \texttt{0x56}, \texttt{0x7b}, \texttt{0x82}, \texttt{0x48}, \texttt{0x2f}, \texttt{0x99}, \texttt{0x83}, \texttt{0x5a}, \texttt{0x49}, \texttt{0x68}, \texttt{0x12}, \texttt{0x23}, \texttt{0x2f}, \texttt{0x2a}) \\
\mathbf{C}_{10} &= (\texttt{0xa4}, \texttt{0x6b}, \texttt{0x5d}, \texttt{0xa6}, \texttt{0x85}, \texttt{0x33}, \texttt{0x7f}, \texttt{0x5e}, \texttt{0x4b}, \texttt{0x77}, \texttt{0x47}, \texttt{0x32}, \texttt{0xa7}, \texttt{0x17}, \texttt{0x86}, \texttt{0x42}) \\
\mathbf{C}_{11} &= (\texttt{0x85}, \texttt{0x2c}, \texttt{0x90}, \texttt{0x2e}, \texttt{0xa7}, \texttt{0x3d}, \texttt{0x4f}, \texttt{0x55}, \texttt{0x91}, \texttt{0x46}, \texttt{0x77}, \texttt{0x53}, \texttt{0x56}, \texttt{0x05}, \texttt{0x04}, \texttt{0x32}) \\
\mathbf{C}_{12} &= (\texttt{0x7e}, \texttt{0xb3}, \texttt{0x92}, \texttt{0x91}, \texttt{0x33}, \texttt{0x05}, \texttt{0x27}, \texttt{0x36}, \texttt{0x35}, \texttt{0x27}, \texttt{0x9e}, \texttt{0x8b}, \texttt{0x5b}, \texttt{0x6e}, \texttt{0x47}, \texttt{0x84}) \\
\mathbf{C}_{13} &= (\texttt{0x90}, \texttt{0xca}, \texttt{0x4d}, \texttt{0x21}, \texttt{0x85}, \texttt{0x05}, \texttt{0xd8}, \texttt{0x47}, \texttt{0x98}, \texttt{0x85}, \texttt{0x06}, \texttt{0xbb}, \texttt{0x6b}, \texttt{0xd0}, \texttt{0x17}, \texttt{0x8f}) \\
\mathbf{C}_{14} &= (\texttt{0x94}, \texttt{0x44}, \texttt{0x3a}, \texttt{0x81}, \texttt{0x25}, \texttt{0x0e}, \texttt{0x96}, \texttt{0x0b}, \texttt{0x62}, \texttt{0x56}, \texttt{0x24}, \texttt{0x68}, \texttt{0x84}, \texttt{0x01}, \texttt{0x0d}, \texttt{0x06}) \\
\mathbf{C}_{15} &= (\texttt{0xa1}, \texttt{0x05}, \texttt{0x66}, \texttt{0x14}, \texttt{0x78}, \texttt{0xbd}, \texttt{0x52}, \texttt{0xdb}, \texttt{0x42}, \texttt{0x33}, \texttt{0xcd}, \texttt{0x9e}, \texttt{0x2d}, \texttt{0x15}, \texttt{0x8a}, \texttt{0xf1}) \\
\mathbf{C}_{16} &= (\texttt{0x40}, \texttt{0x21}, \texttt{0x05}, \texttt{0x5b}, \texttt{0x73}, \texttt{0x8c}, \texttt{0x5f}, \texttt{0x5f}, \texttt{0x57}, \texttt{0x09}, \texttt{0x36}, \texttt{0x98}, \texttt{0x31}, \texttt{0x31}, \texttt{0x73}, \texttt{0x05}) \\
\mathbf{C}_{17} &= (\texttt{0x2f}, \texttt{0x33}, \texttt{0x87}, \texttt{0x86}, \texttt{0x67}, \texttt{0x03}, \texttt{0x05}, \texttt{0x32}, \texttt{0x92}, \texttt{0xf8}, \texttt{0xbc}, \texttt{0x41}, \texttt{0x13}, \texttt{0x06}, \texttt{0x95}, \texttt{0x26}) \\
\mathbf{C}_{18} &= (\texttt{0x6c}, \texttt{0x04}, \texttt{0x9f}, \texttt{0xf0}, \texttt{0x9b}, \texttt{0x54}, \texttt{0x8a}, \texttt{0x18}, \texttt{0x06}, \texttt{0x05}, \texttt{0x66}, \texttt{0x85}, \texttt{0x47}, \texttt{0x65}, \texttt{0x68}, \texttt{0x68}) \\
\mathbf{C}_{19} &= (\texttt{0x51}, \texttt{0x9c}, \texttt{0x24}, \texttt{0x49}, \texttt{0x28}, \texttt{0x69}, \texttt{0x96}, \texttt{0x26}, \texttt{0x87}, \texttt{0x35}, \texttt{0x89}, \texttt{0x67}, \texttt{0x04}, \texttt{0x91}, \texttt{0x99}, \texttt{0xa0}) \\
\mathbf{C}_{20} &= (\texttt{0x92}, \texttt{0x82}, \texttt{0x11}, \texttt{0x87}, \texttt{0x1a}, \texttt{0x35}, \texttt{0x9e}, \texttt{0xd6}, \texttt{0x28}, \texttt{0x68}, \texttt{0x21}, \texttt{0x9b}, \texttt{0x92}, \texttt{0x05}, \texttt{0x67}, \texttt{0x8b}) \\
\mathbf{C}_{21} &= (\texttt{0x81}, \texttt{0xb3}, \texttt{0x5c}, \texttt{0x12}, \texttt{0x74}, \texttt{0x26}, \texttt{0x2e}, \texttt{0x93}, \texttt{0x6a}, \texttt{0x66}, \texttt{0x7f}, \texttt{0xdc}, \texttt{0x94}, \texttt{0x58}, \texttt{0x7d}, \texttt{0x13}) \\
\mathbf{C}_{22} &= (\texttt{0x1a}, \texttt{0x35}, \texttt{0x9e}, \texttt{0xd6}, \texttt{0x28}, \texttt{0x68}, \texttt{0x21}, \texttt{0x9b}, \texttt{0x92}, \texttt{0x05}, \texttt{0x67}, \texttt{0x8b}, \texttt{0x81}, \texttt{0xb3}, \texttt{0x5c}, \texttt{0x12}) \\
\mathbf{C}_{23} &= (\texttt{0x74}, \texttt{0x26}, \texttt{0x2e}, \texttt{0x93}, \texttt{0x6a}, \texttt{0x66}, \texttt{0x7f}, \texttt{0xdc}, \texttt{0x94}, \texttt{0x58}, \texttt{0x7d}, \texttt{0x13}, \texttt{0x75}, \texttt{0x61}, \texttt{0xba}, \texttt{0x13}) \\
\mathbf{C}_{24} &= (\texttt{0x99}, \texttt{0x34}, \texttt{0x96}, \texttt{0x88}, \texttt{0x23}, \texttt{0x9b}, \texttt{0x56}, \texttt{0x16}, \texttt{0x80}, \texttt{0x91}, \texttt{0x4e}, \texttt{0xfc}, \texttt{0x22}, \texttt{0x2d}, \texttt{0x3a}, \texttt{0x92}) \\
\mathbf{C}_{25} &= (\texttt{0x69}, \texttt{0x85}, \texttt{0x28}, \texttt{0x83}, \texttt{0x98}, \texttt{0x47}, \texttt{0x28}, \texttt{0x69}, \texttt{0x96}, \texttt{0x26}, \texttt{0x87}, \texttt{0x35}, \texttt{0x89}, \texttt{0x67}, \texttt{0x04}, \texttt{0x91}) \\
\mathbf{C}_{26} &= (\texttt{0x99}, \texttt{0xa0}, \texttt{0x92}, \texttt{0x82}, \texttt{0x11}, \texttt{0x87}, \texttt{0x1a}, \texttt{0x35}, \texttt{0x9e}, \texttt{0xd6}, \texttt{0x28}, \texttt{0x68}, \texttt{0x21}, \texttt{0x9b}, \texttt{0x92}, \texttt{0x05}) \\
\mathbf{C}_{27} &= (\texttt{0x67}, \texttt{0x8b}, \texttt{0x81}, \texttt{0xb3}, \texttt{0x5c}, \texttt{0x12}, \texttt{0x74}, \texttt{0x26}, \texttt{0x2e}, \texttt{0x93}, \texttt{0x6a}, \texttt{0x66}, \texttt{0x7f}, \texttt{0xdc}, \texttt{0x94}, \texttt{0x58}) \\
\mathbf{C}_{28} &= (\texttt{0x7d}, \texttt{0x13}, \texttt{0x75}, \texttt{0x61}, \texttt{0xba}, \texttt{0x13}, \texttt{0x74}, \texttt{0x26}, \texttt{0x2e}, \texttt{0x93}, \texttt{0x6a}, \texttt{0x66}, \texttt{0x7f}, \texttt{0xdc}, \texttt{0x94}, \texttt{0x58}) \\
\mathbf{C}_{29} &= (\texttt{0x7d}, \texttt{0x13}, \texttt{0x99}, \texttt{0x34}, \texttt{0x96}, \texttt{0x88}, \texttt{0x23}, \texttt{0x9b}, \texttt{0x56}, \texttt{0x16}, \texttt{0x80}, \texttt{0x91}, \texttt{0x4e}, \texttt{0xfc}, \texttt{0x22}, \texttt{0x2d}) \\
\mathbf{C}_{30} &= (\texttt{0x3a}, \texttt{0x92}, \texttt{0x69}, \texttt{0x85}, \texttt{0x28}, \texttt{0x83}, \texttt{0x98}, \texttt{0x47}, \texttt{0x28}, \texttt{0x69}, \texttt{0x96}, \texttt{0x26}, \texttt{0x87}, \texttt{0x35}, \texttt{0x89}, \texttt{0x67}) \\
\mathbf{C}_{31} &= (\texttt{0x04}, \texttt{0x91}, \texttt{0x99}, \texttt{0xa0}, \texttt{0x92}, \texttt{0x82}, \texttt{0x11}, \texttt{0x87}, \texttt{0x1a}, \texttt{0x35}, \texttt{0x9e}, \texttt{0xd6}, \texttt{0x28}, \texttt{0x68}, \texttt{0x21}, \texttt{0x9b}) \\
\mathbf{C}_{32} &= (\texttt{0x92}, \texttt{0x05}, \texttt{0x67}, \texttt{0x8b}, \texttt{0x81}, \texttt{0xb3}, \texttt{0x5c}, \texttt{0x12}, \texttt{0x74}, \texttt{0x26}, \texttt{0x2e}, \texttt{0x93}, \texttt{0x6a}, \texttt{0x66}, \texttt{0x7f}, \texttt{0xdc})
\end{align*}
}

\subsection*{Key Schedule Algorithm}

\begin{algorithm}[H]
\caption{Kuznyechik Key Schedule}
\label{alg:kuznyechik_key_schedule}
\begin{algorithmic}[1]
\Require 256-bit master key $K = \mathbf{K}_1 \parallel \mathbf{K}_0$, where $|\mathbf{K}_1| = |\mathbf{K}_0| = 128$ bits
\Ensure Round keys $K^{(0)}, K^{(1)}, \ldots, K^{(9)}$
\State $K^{(0)} \gets \mathbf{K}_1$
\State $K^{(1)} \gets \mathbf{K}_0$
\State $(\mathbf{a}_0, \mathbf{a}_1) \gets (\mathbf{K}_1, \mathbf{K}_0)$
\For{$j = 1 \text{ to } 32$}
    \State $(\mathbf{a}_0, \mathbf{a}_1) \gets F(\mathbf{a}_0, \mathbf{a}_1, \mathbf{C}_j)$   \hfill {The F-function uses S-box, L-layer, and XOR with constant}
    \If{$j \bmod 8 = 0$}
        \State $K^{( j/4 + 1 )} \gets \mathbf{a}_1$
        \State $K^{( j/4 + 2 )} \gets \mathbf{a}_0$
    \EndIf
\EndFor
\end{algorithmic}
\end{algorithm}

\subsection*{Encryption Algorithm}

The encryption process applies 9 rounds of transformation followed by a final key addition:

\begin{algorithm}[H]
\begin{algorithmic}[1]
\caption{Kuznyechik Encryption}
\State \textbf{Input:} 128-bit plaintext $\mathbf{P}$, round keys $K^{(0)}, \ldots, K^{(9)}$
\State \textbf{Output:} 128-bit ciphertext $\mathbf{C}$
\State $\mathbf{X} \leftarrow \mathbf{P}$
\For{$i = 0$ to $8$}
\State $\mathbf{X} \leftarrow \mathbf{X} \oplus K^{(i)}$ \   \hfill {AddRoundKey}
\State $\mathbf{X} \leftarrow S(\mathbf{X})$   \hfill {SubBytes (byte-wise application of S-box)}
\State $\mathbf{X} \leftarrow \mathbf{L}(\mathbf{X})$   \hfill {Linear transformation}
\EndFor
\State $\mathbf{C} \leftarrow \mathbf{X} \oplus K^{(9)}$ \   \hfill {Final key addition}
\Return $\mathbf{C}$
\end{algorithmic}
\end{algorithm}

For completeness, we include the decryption algorithm, though our analysis uses only encryption. Decryption applies the inverse operations in reverse order:

\begin{algorithm}
\caption{Kuznyechik Decryption}
\begin{algorithmic}[1]
\State \textbf{Input:} 128-bit ciphertext $\mathbf{C}$, round keys $K^{(0)}, \ldots, K^{(9)}$
\State \textbf{Output:} 128-bit plaintext $\mathbf{P}$
\State $\mathbf{X} \gets \mathbf{C}$
\State $\mathbf{X} \gets \mathbf{X} \oplus K^{(9)}$   \hfill {Remove final key}
\For{$i = 8$ downto $0$}
    \State $\mathbf{X} \gets \mathbf{L}^{-1}(\mathbf{X})$  \hfill {Inverse linear transformation}
    \State $\mathbf{X} \gets S^{-1}(\mathbf{X})$  \hfill {Inverse SubBytes (byte-wise application of $S^{-1}$)}
    \State $\mathbf{X} \gets \mathbf{X} \oplus K^{(i)}$   \hfill {Remove round key}
\EndFor
\State $\mathbf{P} \gets \mathbf{X}$
\State \Return $\mathbf{P}$
\end{algorithmic}
\end{algorithm}

\subsection*{Security Analysis}

Kuznyechik has been designed to resist various cryptanalytic attacks, with properties typically cited in the GOST R 34.12-2015 standard and subsequent cryptanalysis. These include:

\begin{itemize}
\item \textbf{Differential Cryptanalysis}: The S-box is designed with optimal differential properties, often cited with a maximum differential probability of $2^{-6}$ for single S-box operations~\cite{BPU16}.
\item \textbf{Linear Cryptanalysis}: The linear approximation table is constructed to show a low maximum bias, often cited as $2^{-4}$ for single S-box operations~\cite{BPU16}.
\item \textbf{Integral Attacks}: The branch number of the linear transformation provides resistance against integral distinguishers by ensuring rapid diffusion~\cite{ADY15,AY15}.
\item \textbf{Algebraic Attacks}: The high degree of the S-box polynomial contributes to resistance against low-degree algebraic relations~\cite{Sh18}.
\end{itemize}

The 9-round structure (for a 256-bit key) provides an adequate security margin against known attacks, with the complex interaction between the S-box and linear layer ensuring rapid diffusion and confusion.
The cipher supports standard modes of operation (ECB, CBC, CFB, OFB, CTR) and can be implemented efficiently in both software and hardware environments.

\section{The (inner/outer) cDU Table for the S-box of Kuznyechik and its inverse}

Recall the complementary property of the inner cDU of an S-box with the outer cDU of its inverse. For the purpose of completion we include both here.
 
{\footnotesize
% [inline block 0: 20 envs, 54060 chars -> data_tex | \begin{longtable}{ccc|ccc|ccc|ccc|ccc} \caption{Inner $c$-Differential Uniformity $\delta_c$ of Kuznyechik S-box \label{...]


% Main title for the analysis
\section{Detailed statistical analysis for \texttt{c=0x4, byte\_8\_in$\to$byte\_8\_out}}
\hrulefill
In this last appendix, we present a snapshot of the output txt file for 9 rounds, for only one test (the text file, for every number of rounds analysis is rather large, from 75 to 120 pages~\cite{github}).

\begin{Verbatim}[fontsize=\footnotesize]
================================================================================
DETAILED STATISTICAL ANALYSIS for c=0x4, byte_8_in->byte_8_out
================================================================================
DISTRIBUTION PROPERTIES:
  Total unique pairs observed: 65,280
  Mean/Median/Std Dev count: 76.29 / 76.00 / 8.74
  Max/Min count: 130 / 41
  Skewness/Kurtosis: 0.109 / 0.027

ENHANCED BIAS METRICS:
  KL Divergence: 0.006574
  Max Chi-square: 37.81
  Relative Entropy: 0.999

NORMALITY TESTS (on the distribution of observed counts):
  Shapiro-Wilk Test: Skipped. Reason: Dataset too large for Shapiro-Wilk (N >= 5000)
  Anderson-Darling Test: Statistic=43.217
    Critical Values (Sig Levels): [0.576, 0.656, 0.787, 0.918, 1.092] ([15.0, 10.0, 5.0, 2.5, 1.0])
    (Interpretation: Statistic > Critical Value at a given significance level suggests non-normal distribution)

GOODNESS-OF-FIT TESTS (vs. Uniform Distribution):
  (Evaluates if the overall distribution of 65,280 pairs is uniform. Degrees of freedom: 65,279)
  Chi-square Test: Statistic=65,319.34, P-value=4.548e-01
  G-test (Log-likelihood): Statistic=65,479.50, P-value=2.890e-01
    (Interpretation: A very small P-value, e.g., < 0.001, provides strong evidence that the cipher's
     output for this configuration is not uniformly distributed as a whole.)

CLUSTER ANALYSIS:
  Found 1 clusters, 0 significant
  Cluster 1: 65280 members, combined p=1.000e+00, avg bias=1.00x

UNCORRECTED SIGNIFICANT PAIRS (raw p < 0.05, before FDR):
  Found 2921 pairs significant before correction
  Top 5 by raw p-value:
    0x00000000000000cf0000000000000000 -> 0x00000000000000760000000000000000 (Bias: 1.30x, p: 0.014)
    0x000000000000005b0000000000000000 -> 0x00000000000000b00000000000000000 (Bias: 1.30x, p: 0.014)
    0x000000000000007a0000000000000000 -> 0x00000000000000070000000000000000 (Bias: 0.72x, p: 0.013)
    0x00000000000000970000000000000000 -> 0x00000000000000c60000000000000000 (Bias: 1.26x, p: 0.033)
    0x000000000000009e0000000000000000 -> 0x00000000000000540000000000000000 (Bias: 1.32x, p: 0.008)

SIGNIFICANT DIFFERENTIAL PAIRS (FDR-corrected p < 7.033e-02):
  (These are specific (Input Diff, Output Diff) pairs whose observed frequencies
   are statistically different from expected, after multiple-test correction.)
  Found 1 significant pairs.
  Top 10 by corrected p-value:
  Input Diff (A)                       Output Diff (B)                      Obs Count  Bias     Corr P-val
  ---------------------------------  ---------------------------------  ---------- -------- ------------
  0x00000000000000290000000000000000   0x000000000000008d0000000000000000   130        1.7x     1.85e-03
================================================================================
\end{Verbatim}

\noindent\fbox{\textbf{COMBINED SIGNIFICANCE DETECTED for 9 rounds}}

\begin{Verbatim}[fontsize=\footnotesize]
    Config: byte_8_in->byte_8_out, c=0x4
    Found 342 pairs with bias > 1.3 AND p < 0.1
      - 0x00000000000000290000000000000000 -> 0x000000000000008d0000000000000000 (Bias: 1.70x, p: 0.000)
      - 0x00000000000000d30000000000000000 -> 0x00000000000000540000000000000000 (Bias: 1.52x, p: 0.000)
      - 0x00000000000000e40000000000000000 -> 0x00000000000000480000000000000000 (Bias: 1.51x, p: 0.000)
\end{Verbatim}

\noindent\fbox{\textbf{BIAS PERSISTENCE ANOMALY for 9 rounds}}
\begin{Verbatim}[fontsize=\footnotesize]
    Config: byte_8_in->byte_8_out, c=0x4
    Observed bias: 1.70x vs Expected: 0.125x
    Ratio: 13.6x higher than expected decay
\end{Verbatim}

\noindent\fbox{\textbf{CRITICAL ALERT: Statistically significant characteristic found for 9 rounds!}}
\begin{Verbatim}[fontsize=\footnotesize]
    Config: byte_8_in->byte_8_out, c=0x4
    Found 1 significant pairs (threshold: 1): Input->Output
    0x00000000000000290000000000000000 -> 0x000000000000008d0000000000000000 (Bias:1.7, p-val:1.85e-03)
\end{Verbatim}

\end{document}